\newcommand{\GL}{\mathrm{GL}}
\newcommand{\SL}{\mathrm{SL}}
\def\rL{\mathrm{L}}
\def\rC{\mathrm{C}}
\def\rH{\mathrm{H}}
\DeclareMathOperator{\im}{Im}
\DeclareMathOperator{\re}{Re}
\DeclareMathOperator*{\slim}{s--lim}
\def\a{\alpha}
\newcommand{\one}{\mathbf{1}}
\newcommand{\ba}{\overline{A}}
\newcommand{\cR}{\mathcal R}
\newcommand{\cRR}{{\mathcal{RR}}}
\def\be{\begin{equation}}
\def\ee{\end{equation}}
\def\ba{{\begin{align}}}
\def\ea{{\end{align}}}
\def\bm{\begin{matrix}}
\def\em{\end{matrix}}
\def\a{{\alpha}}
\def\r{\right}
\def\l{\left}
\def\SL{{\mathrm{SL}}}
\def\0{{\mathbf 0}}
\def\dc{\mathrm{DC}}
\def\cal{\mathcal}
\def\wt{\widetilde}
\theoremstyle{plain}
\newtheorem{theorem}{\bf Theorem}[section]
\newtheorem{lemma}[theorem]{\bf Lemma}
\newtheorem{prop}[theorem]{\bf Proposition}
\newtheorem{cor}[theorem]{\bf Corollary}
\theoremstyle{remark}
\newtheorem{remark}[theorem]{\bf Remark}
\newtheorem{defi}[theorem]{\bf Definition}
\numberwithin{equation}{section}
\def\<{\langle}
\def\>{\rangle}
\def\ssm{\smallsetminus}
\renewcommand{\setminus}{\ssm}
\DeclareMathOperator*{\esssup}{{ess--sup}}
\def\dom{\operatorname{Dom}}
\newcommand{\dist}{\operatorname{dist}}
\newcommand{\supp}{\operatorname{supp}}
\newcommand{\KK}{{\cal K}}
\newcommand{\RR}{{\cal R}}
\newcommand{\C}{{\mathbb C}}
\newcommand{\Q}{{\mathbb Q}}
\newcommand{\R}{{\mathbb R}}
\newcommand{\T}{{\mathbb T}}
\newcommand{\Z}{{\mathbb Z}}
\def\ac{{\mathrm{ac}}}
\def\pp{{\mathrm{pp}}}
\def\B0{{\bold{0}}}
\renewcommand{\phi}{\varphi}
\newcommand{\Ran}{\operatorname{Ran}}
\def\beq{\begin{equation}}
\def\eeq{\end{equation}}
\newcommand{\comm}[1]{}
\newcommand{\comment}[1]{}
\begin{document}

\title[]{Ballistic transport for one-dimensional quasiperiodic Schr\"odinger operators}
\author{Lingrui Ge}
\address{
Department of Mathematics, University of Califoria Irvine, 340 Rowland Hall, Irvine CA, 92697--3875, United States of America}
\email{lingruig@uci.edu}

\author[I. Kachkovskiy]{Ilya Kachkovskiy}
\address{Department of Mathematics,
	Michigan State University,
	Wells Hall, 619 Red Cedar Road,
	East Lansing, MI 48824,
	United States of America}
\email{ikachkov@msu.edu}

\maketitle
\begin{abstract}
In this paper, we show that one-dimensional discrete multi-frequency quasiperiodic Schr\"odinger operators with smooth potentials demonstrate ballistic motion on the set of energies on which the corresponding Schr\"odinger cocycles are smoothly reducible to constant rotations. The proof is performed by establishing a local version of strong ballistic transport on an exhausting sequence of subsets on which reducibility can be achieved by a conjugation uniformly bounded in the $\rC^{\ell}$-norm. We also establish global strong ballistic transport under an additional integral condition on the norms of conjugation matrices. The latter condition is quite mild and is satisfied in many known examples.
\end{abstract}

\section{Introduction}
\subsection{Types of ballistic motion}Let $H$ be a discrete Schr\"odinger operator on $\ell^2(\Z)$:
\beq
\label{h_def_abstract}
(H\psi)(n)=\psi(n-1)+\psi(n+1)+V_n\psi(n),\ \ n\in\Z.
\eeq 
where $\{V_n\}_{n\in \Z}$ is a sequence of real numbers (the potential). The operator $H$ is a Hamiltonian of a single quantum particle with wave function $\psi\colon \Z\to\C$, whose time evolution is described by the {\it time-dependent Schr\"odinger equation}:
\begin{equation}\label{schev}
i\frac{\partial\psi}{\partial t}=H\psi, \ \ \psi(0)\in\ell^2(\Z).
\end{equation}
Using the spectral theorem, one may explicitly solve \eqref{schev} via
\begin{equation}\label{time}
\psi(t)=e^{-itH}\psi(0).
\end{equation}
Let $B$ be a self-adjoint operator associated to an observable quantity. The {\it Heisenberg evolution} of $B$ is described by
$$
B(T)=e^{iTH}Be^{-iT H}.
$$
In the present paper, we will be interested in spatial transport properties of a quantum particle on the lattice $\Z$. The relevant observable quantity is the {\it position operator}
$$
(X\psi)(n):=n\psi(n),\ \ n\in\Z,
$$
which is an unbounded self-adjoint operator with the natural domain of definition
$$
\dom X=\{\psi\in\ell^2(\Z)\colon \sum_{n\in\Z}|n|^2|\psi(n)|^2<+\infty\}.
$$
One can check by direct calculation that the Heisenberg evolution of the position operator can be expressed in the following form:
\begin{equation}\label{evolution}
X(T):=e^{iTH}Xe^{-iTH}=X+\int_0^T e^{itH}Ae^{-itH}\,dt,\quad T\in \R,
\end{equation}
where
\begin{equation}\label{A_def}
A\psi(n)=i(\psi(n+1)-\psi(n-1))
\end{equation}
is sometimes known as the {\it current operator} (a tight binding analogue of the gradient operator $i\nabla$). 
Since $A$ is bounded, \eqref{evolution} implies that $X=X(0)$ and $X(T)$ have the same domain. We will be interested in the phenomenon of {\it ballistic motion}, which states, informally, that the position of the particle grows linearly with time (``$X(T)\approx T$''). More precisely, we will address the following limits:
\begin{equation}
\label{X_lim}
\lim\limits_{T\to +\infty}\frac{1}{T}X(T)\psi_0,
\end{equation}
where, initially, $\psi_0\in \dom X$. One can consider the limit \eqref{X_lim}, if it exists, as the ``asymptotic velocity'' of the state $\psi_0$ at infinite time. The {\it asymptotic velocity operator} can therefore be defined by
\begin{equation}\label{Q_def}
Q=\slim\limits_{T\to +\infty}\frac{1}{T} X(T)=\slim\limits_{T\to +\infty}\frac{1}{T}\int_0^T e^{itH}Ae^{-itH}\,dt. 
\end{equation}
The first limit can only be considered on $\dom X$, but, since the term $\frac{1}{T}X(0)$ of $\eqref{evolution}$ disappears as $T\to \infty$, it is natural to drop it from consideration. We say that a Schr\"odinger operator $H$ demonstrates {\it strong ballistic transport}, if the strong limit in the right hand side of \eqref{Q_def} exists, is defined on the whole $\ell^2(\Z)$ and, moreover, $\ker Q=\{0\}$.

An immediate consequence of \eqref{evolution} and \eqref{Q_def} is  that all of the moments of the position operator
grow ballistically in time. More specifically, for any $p\ge 2$ and $0\neq \psi_0\in \dom |X|^p$, we have
\begin{equation}\label{ballistic}
\lim\limits_{T\rightarrow +\infty}T^{-p}\<|X(T)|^p\psi_0,\psi_0\>=\<|Q|^p\psi_0,\psi_0\>>0.
\end{equation}
If we take $p=2$, \eqref{ballistic} immediately implies that $H$ has {\it ballistic motion}. More precisely, we say $H$ has ballistic motion if 
\begin{equation}\label{ballistic motion}
\liminf \limits_{T\rightarrow +\infty}T^{-2}\<|X(T)|^2\psi_0,\psi_0\>>0,\quad \psi_0\in \dom X,\quad \psi_0\neq 0.
\end{equation}
Note that \eqref{ballistic motion} is weaker than \eqref{ballistic} with $p=2$. One can also consider \eqref{ballistic motion} for $p\neq 2$.

Ballistic motion is one of the examples of wave packet spreading, which indicates absence of localization. 
The fundamental work in this aspect is the RAGE Theorem \cite{CFKS} which states for a Schr\"odinger operator $H$, if $\psi\in \ell^2_{\mathrm{c}}(H)$, then for any $N>0$,
\begin{equation}\label{rage}
{\lim\limits_{T\rightarrow+\infty}\frac{1}{T}\int_0^T\sum\limits_{|n|\leq N}|\langle\delta_n, e^{iTH}\psi\rangle|^2 dt=0,}
\end{equation}
where 
$$
\ell^2_{\mathrm{c}}(H)=\{\psi\in\ell^2(\Z):\mu_{\psi}=\mu_{\psi,\mathrm{c}}\}=\ell^2_{\pp}(H)^{\perp}
$$
is the subspace corresponding to the continuous spectrum of $H$. In other words, a wavepacket in the  continuous subspace of $H$ will spend most of the time outside of any fixed compact subset of $\Z$. In the case of the absolutely continuous subspace, \eqref{rage} can be further improved to a version that does not involve time averaging:
\begin{equation}\label{rage_ac}
{\lim\limits_{T\rightarrow+\infty}\sum\limits_{|n|\leq N}|\langle\delta_n, e^{iTH}\psi\rangle|^2 dt=0,}\quad  \psi\in \ell^2_{\ac}(H).
\end{equation}
Both \eqref{rage} and \eqref{rage_ac} imply the following growth conditions on the moments: 
\beq
\label{eq_rage}
\lim\limits_{T\to +\infty}\frac{1}{T}\int_0^T\<|X(t)|^p\psi_0,\psi_0\>\,dt=+\infty,\quad 0\neq \psi\in \ell^2_{\mathrm{c}}(H).
\eeq
\beq
\label{eq_rage_ac}
\lim\limits_{T\to +\infty}\<|X(T)|^p\psi_0,\psi_0\>=+\infty,\quad 0\neq \psi\in \ell^2_{\ac}(H).
\eeq
However, it is harder to estimate the exact rate of growth. In fact, this rate can be related to the Hausdorff dimension of the spectrum and spectral measures of $H$, see \cite{Last}. For the absolutely continuous case, the Guarneri--Combes--Last Theorem \cite{Last} states that, for any $p\ge 2$,
\begin{equation}\label{timeaverage}
\liminf \limits_{T\rightarrow +\infty}\frac{1}{T^{p+1}}\int_0^T\<|X(t)|^p\psi_0,\psi_0\>\,dt>0,\quad \forall \psi\in \dom |X|^p,\quad 0\neq \psi\in \ell^2_{\ac}(H).
\end{equation}
One can compare the above versions of transport as follows:
\begin{align}
\label{eq_diagram}
\begin{split}	
\textrm{existence of }\eqref{Q_def}\textrm{ with trivial kernel}\Rightarrow\eqref{ballistic}\Rightarrow&\,\eqref{ballistic motion}\textrm{ for all }p\Rightarrow\eqref{timeaverage}\Rightarrow  \eqref{eq_rage}\\
&\,\,\,\,\Downarrow\\
&\eqref{eq_rage_ac}.
\end{split}
\end{align}
Thus, strong ballistic transport (as defined in \eqref{Q_def}) can be viewed as the strongest version of ballistic motion. Note that, since the operator $A$ is bounded, \eqref{evolution} implies an elementary {\it ballistic upper bound} on the wave packet spreading. In other words, no transport can be stronger than ballistic.

In general, ballistic transport is not expected on any spectra other than purely absolutely continuous. In particular, it was shown in \cite{Simon_ballistic} that point spectrum cannot support any ballistic motion. However, one can still expect it after restricting the operator to a subspace that supports purely absolutely continuous spectrum. In this regard, we will need a version of the above definition that would be local in energy. Let $\KK\subset \R$ be a Borel subset. We will say that $H$ has {\it strong ballistic transport on $\KK$} if there exists a self-adjoint operator $Q$ such that
\beq
\label{eq_ballistic_K_def}
\slim\limits_{T\to +\infty}\frac{1}{T}\int_0^T e^{itH}\one_{\KK}(H)A\one_{\KK}(H)e^{-itH}\,dt=\one_{\KK}Q\one_{\KK}
\eeq
and $\ker Q=\Ran(\one_{\KK})^{\perp}$. While we will be able to establish \eqref{eq_ballistic_K_def} in a range of situations, a significant gain in generality can be achieved by slightly relaxing the above definition. We will say that $H$ has {\it local ballistic transport on $\KK$} if there exists a self-adjoint operator $Q$ and a sequence of Borel subsets $\{\KK_j\}_{j=1}^{\infty}$ such that $\KK=\cup_j \KK_j$ and $H$ satisfies \eqref{eq_ballistic_K_def} on each $\KK_j$. As a part of the definition, we require that $Q$ is the same operator for all $\KK_j$, and $\ker Q(\KK)=\Ran(\one_{\KK})^{\perp}$. For the purpose of the diagram \eqref{eq_diagram}, local ballistic transport implies lower bounds on wavepacket spreading just as good as strong ballistic transport. More precisely, if $\psi\in \Ran\one_{\KK}(H)$, then, for large $j$, we have
\beq
\label{eq_ballistic_conclusion}
\frac{1}{T}\int_0^T e^{itH}Ae^{-itH}\psi\,dt=\one_{\KK_j}Q\psi+\psi^{\perp}(T)+o(1),
\eeq
where $\psi^{\perp}(T)$ is orthogonal to $\one_{\KK_j}Q\psi$, and hence can only increase the norm. Note that we are using the right hand side of \eqref{X_lim} instead of $\frac{1}{T}X(T)\psi$, since we cannot guarantee that the intersection $\Ran\one_{\KK}(H)\cap \dom X$ is large enough. However, if $\psi\in \dom X$ is sufficiently close to $\Ran \one_{\KK}(H)$ (for example, $\|(1-\one_{\KK}(H))\psi\|< \frac12\|\one_{\KK} Q\psi\|$), then \eqref{eq_ballistic_conclusion} implies a ballistic lower bound on $\|X(T)\psi\|$. The set of such $\psi$ is dense in $\Ran \one_{\KK}(H)$. The difference between \eqref{eq_ballistic_K_def} and \eqref{eq_ballistic_conclusion} is that the latter may have a non-trivial ``tail'' which stays within the range of $\one_{\KK}(H)$, but eventually escapes any $\Ran(\one_{\KK_j}(H))$ with finite $j$. However, this tail can only strengthen the ballistic lower bound. As a consequence, local ballistic transport still implies \eqref{ballistic} and \eqref{ballistic motion}.

Unlike \eqref{timeaverage} and \eqref{eq_rage_ac}, we are not aware of any results of the form \eqref{Q_def}--\eqref{ballistic motion} for general Schr\"odinger operators with absolutely continuous spectra.\footnote{Except for potentials decaying on infinity, where one can obtain these bounds using scattering theory. In general, ballistic transport is expected to be stable under decaying perturbations. We do not go into the details in the present paper.} Instead, all known results only apply to potentials of special structure. First results of this type were obtained in \cite{AschKnauf} for periodic operators in the continuum. Later, a tight binding analogue was obtained in \cite{DamanikLY} for discrete periodic Jacobi matrices, motivated by applications to XY spin chains. See also related paper \cite{DLLY} about anomalous (non-ballistic) transport for Fibonacci-type operators with singular continuous spectra. The limit-periodic case was studied in \cite{fillman} where an analogue of \eqref{X_lim} was proved by periodic approximations.

\subsection{Quasiperiodic operators }The next natural class of operators with absolutely continuous spectra, where one can expect ballistic motion/ballistic transport, is {\it quasiperiodic Schr\"odinger operators}, which will be the subject of the present paper. Let $v\in C^{s}(\T^d;\R)$ be a smooth function. We will identify $\Z^d$-periodic functions on $\R^d$ with functions on $\T^d$. Let also $\alpha\in \R^d$ be a {\it frequency vector}. We will always assume that $\{1,\alpha_1,\ldots,\alpha_d\}$ are independent over $\Q$. An {\it smooth multi-frequency quasiperiodic Schr\"odinger operator} is an operator of the form
\beq
\label{h_def}
(H_{x}\psi)(n)=\psi(n-1)+\psi(n+1)+v(x+n\alpha)\psi(n),\ \ n\in\Z.
\eeq
Here $x\in \T^d$ is the quasiperiodic phase, and one usually considers the whole family $\{H_x\}_{x\in \T^d}$.

Quasiperiodic operators \eqref{h_def} with small analytic potentials $v$ are often known to have purely absolutely continuous spectra, see \cite{E92,Amor,aj,BJ,AK,AFK}. In \cite{Kach1}, it was shown that a large class of such operators (in all cited regimes, except for the Liouville case in \cite{AFK}) satisfies {\it $x$-averaged strong ballistic transport}. In other words, instead of \eqref{X_lim}, one has the following convergence statement in the direct integral space $\mathrm{L}^2(\T^d\times \Z)$:
$$
\slim_{T\to +\infty}\l(\frac{1}{T}\int_{\T^d}^{\oplus}X(x,T)\,dx\r)=\int_{\T^d}^{\oplus}Q(x)\,dx.
$$
where $X(x,T):=e^{iTH_x}Xe^{-iTH_x}$. The proof used the duality method based on \cite{jk2}. Like \cite{DamanikLY}, the work \cite{Kach1} was motivated by applications to the XY spin chains. The $x$-averaged version of ballistic transport implies existence of the limit \eqref{Q_def} on a subsequence of time scales for almost every $x$ and hence is sufficient for the conclusion on the XY spin chain. However, it does not imply any of the claims \eqref{Q_def}--\eqref{ballistic motion} in full. In the same year, a KAM-type approach was developed in \cite{zhao1} in order to obtain bounds of type \eqref{ballistic motion} in the perturbative setting. The advantage is that it works for all $x$ and does not require to take a subsequence of time scales. However, it falls short of establishing existence of \eqref{Q_def}. The KAM method of \cite{zhao1} was later expanded in \cite{zhaozhang} to treat the one-frequency Liouvillean case, by further weakening \eqref{ballistic motion} to a lower bound on some transport exponents.

\subsection{Outline of the approach} The goal of the present paper is to obtain a result which has the advantages of both \cite{Kach1} and \cite{zhao1}. One can see it as a refinement of either of the papers, however, the general line of the argument is closer to \cite{Kach1}. In the quasiperiodic case, one of the results of \cite{Kach1} is the calculation of the asymptotic velocity operator $Q(x)$, but, since it is only obtained on a sequence of time scales, one cannot exclude the possibility of large oscillations around the limiting value. Moreover, \cite{Kach1} predicts a possible mechanism of convergence: after applying duality, it becomes a procedure of diagonal truncation of an operator dual to \eqref{A_def} in the basis of the eigenvectors of the dual Hamiltonian with purely point spectrum. The convergence of the truncation is only obtained in the Fourier dual direct integral space $\rL^2(\T\times\Z^d)$, which is not enough to guarantee pointwise strong convergence in the original direct integral space. {\it A natural question arises: can extra information on the dual operator family improve the rate of convergence in \eqref{Q_def}? If yes, what kind of information can be used?}

In order to obtain a pointwise bound, we would like to replace $\rL^2(\T^d\times\Z)$ by $L^{\infty}(\T^d;\ell^2(\Z))$ or by $C(\T^d;\ell^2(\Z))$. One can try to obtain that by improving  convergence in the dual space: for example, to $\ell^1(\Z^d;\rL^2(\T))$. On any finite box, $\ell^1$ and $\ell^2$-norms are equivalent (with the constants depending on the size of the box). Therefore, one possible way of obtaining $\ell^1$-convergence would be to obtain a uniform $\ell^1$ bound on the tails. The latter can be achieved by investigating quantitative character of the localization for the dual model. For example, one can take advantage of {\it exponential dynamical localization in expectation} which has been obtained in \cite{JitoKru} and \cite{gyz} under some assumptions. Along these lines, one can obtain the desired control on the tails, which would imply strong ballistic transport for almost every $x\in \T^d$. This approach was partially implemented in the preprint \cite{Kach2}, which is no longer intended for publication since the current paper supersedes it in several ways. 

The main results of the present paper are Theorem \ref{main} (the local result) and Theorem \ref{main2} (the global result). In Theorem \ref{main}, we state that the operators \eqref{h_def} have local ballistic transport on the set of energies on which the corresponding Schr\"odinger cocycles are $\rC^s$-reducible with $s>d$ (see Section 2.1 for precise definitions). We do not require any quantitative information on the conjugating matrices and do not care about Diophantine properties of the frequency vector. While the result falls short of the complete strong ballistic transport, most of its conclusions (such as ballistic motion) also hold, as described above. In Theorem \ref{main2}, we state that one can obtain strong ballistic transport under an additional integral condition on the norms of the conjugating matrices. Several known examples, including the settings of \cite{zhao1} and \cite{gyz}, satisfy this condition.

The proof of the local result is based on the following observation: suppose that $\cR$ is the set of energies under consideration, and 
$$
\KK_1\subset\KK_2\subset\ldots\subset \RR
$$ 
is a sequence of Borel subsets such that $\RR\setminus(\cup_j \KK_j)$ has zero spectral measure with respect to $H_x$. Then, it is sufficient to check that the limit \eqref{eq_ballistic_K_def} exists on each $\KK_j$. The main problem in obtaining ``nice'' localization bounds for the dual model is the fact that regularity of the conjugation matrices (Bloch waves) is not uniform in the energy and depends on the Diophantine properties of the rotation number. Quantitative estimates, such as in \cite{gyz}, can be quite delicate. On the other hand, if one is allowed to restrict to a subset of energies, we can get, basically, as good control of the localization parameters as desired. In particular, we can get a ridiculously strong version of uniform localization, which is not even remotely available on the whole spectrum. As expected, the constants will get worse as one increases the set of energies under consideration. Since we only need $\ell^1$ control of the tails, we also do not require Anderson (exponential) localization, and are satisfied with polynomial decay of eigenfunctions, which allows us to consider smooth potentials rather than analytic. The idea of restricting to an exhausting subset of energies/rotation numbers while maintaining control on the regularity is not unlike the argument in \cite{gy}.

The global result is somewhat more delicate. While we cannot expect any uniform reducibility bounds, the desired bound still contains an integral in $\theta$, and hence, just as in the proofs of dynamical localization, one can hope for a quantitative result ``in expectation''. Using a variant of the covariant representation for the eigenfunctions of the dual operator by duality such as in \cite{jk2}, we reduce the integral 
\beq
\label{eq_dynamical}
\int_{\T}|\<\delta_{m},e^{it L_{\theta}}\delta_n\>|\,d\theta
\eeq
that appears in the proof of dynamical localization, to a convolution-type bound on the eigenfunctions which, in turn, can be controlled in terms of $\rC^s$ or Sobolev norms of the conjugating matrices, averaged over the rotation number. Unfortunately, in order to obtain better bounds, we would ideally want to estimate a different, smaller integral
$$
\int_{\T}|\<\delta_{m},e^{it L_{\theta}}\delta_n\>|^r\,d\theta,\quad r>1,
$$
and we were not able to find any way to take advantage of $r>1$, which actually appears in our desired bounds. Still, by taking some losses, we were able to obtain a bound by a series of convolution-type estimates for expressions of the form \eqref{eq_dynamical}. As a result, in the global theorem, the smoothness requirement is of the form $\rC^s$ with $s>5d/2$, rather than $s>d$ as in the local result. Still, our integral condition is satisfied by a large margin in the models where exponential dynamical localization is obtained such as \cite{gyz}. It is also easy to reformulate our global result as a conditional one: for example, strong ballistic transport will hold on $\KK$ if we assume $\rC^1$-reducibility on $\KK$ (without any quantitative control) and power law dynamical localization on $\KK$:
$$
\int_{\T}|\<\delta_{m},\one_{\KK}(L_{\theta})e^{it L_{\theta}}\delta_n\>|\,d\theta\le \frac{C}{(1+|m-n|)^s},\quad s>4d,
$$
which is weaker than, say, exponential dynamical localization in expectation.

In both cases, the stated arguments would only imply the corresponding version of ballistic transport for almost every $x\in\T^d$, since the duality ignores measure zero subsets of phases. In the case of the dual operator, this can be a real issue: for example, one cannot expect localization for all $\theta\in \T$ \cite{js}. However, quantities related to the absolutely continuous spectrum are known to be more phase stable. We were able to recover continuity in $x$ by comparing the pre-limit expressions in the definition of $Q(x)$ and the alternative definition of $Q(x)$ and showing that they are both uniformly continuous in the strong operator topology. In the latter case, we had to use quantitative continuity of the absolutely continuous spectral measures discussed in Section 5. As stated, one can only obtain it in the setting of local ballistic transport, since one has to restrict the operator to one of the subsets $\KK_j$. However, that particular part survives after passing to the union of $\KK_j$, and thus is also applicable to the global case.

\subsection{Acknowledgments} 
Ge is partially supported by the NSF grant DMS--1901462, and Kachkovskiy is currently supported by NSF DMS--1846114.

Both authors would like to thank S. Jitomirskaya for facilitating their collaboration and for comments on the manuscript.

\section{Preliminaries and statements of the results}
\subsection{Schr\"odinger cocycles and reducibility}
Let $A \in \rC^s(\T^d;{\SL}(2,\R))$, and consider a frequency vector $\alpha\in\R^d$ such that $\{1,\alpha_1,\ldots,\alpha_d\}$ are independent over $\Q$. By definition, a {\it quasiperiodic $\rC^s$-smooth $\SL(2,\R)$-cocycle} is a map
$$
(\alpha,A)\colon \begin{cases}
\T^d \times \C^2 \to  \T^d \times \C^2;\\
(x,v) \mapsto (x+\alpha,A(x)v).
\end{cases}
$$
The iterates of $(\alpha,A)$ are of the form $(\alpha,A)^n=(n\alpha,  A_n)$, where
$$
{A}_n(x):=
\begin{cases}
A(x+(n-1)\alpha) \cdots A(x+\alpha) A(x),  & n\geq 0;\\
A^{-1}(x+n\alpha) A^{-1}(x+(n+1)\alpha) \cdots A^{-1}(x-\alpha), & n<0.
\end{cases}
$$
We will usually simply call the above maps cocycles. Similarly, one can talk about $\SL(2,\C)$-cocycles. The {\it Lyapunov exponent} of the cocycle $(\alpha,A)$ is defined by
$$
\displaystyle
L(\alpha,A):=\lim\limits_{n\to \infty} \frac{1}{n} \int_{\T^d} \ln \|A_n(x)\| dx.
$$

A cocycle $(\alpha,A)$ is called {\it uniformly hyperbolic} if, for every $x \in \T^d$, there exists a continuous splitting $\C^2=E^s(x)\oplus E^u(x)$ such that for every $n \geq 0$,
$$
\begin{array}{rl}
|{A}_n(x) \, v| \leq C e^{-cn}|v|, &  v \in E^s(x),\\[1mm]
|{A}_n(x)^{-1}   v| \leq C e^{-cn}|v|, &  v \in E^u(x+n\alpha),
\end{array}
$$
for some constants $C,c>0$.
This splitting is invariant under the dynamics, i.e.,
$$A(x) E^{s}(x)=E^{s}(x+\alpha), \quad A(x) E^{u}(x)=E^{u}(x+\alpha),\quad  \forall \  x \in \T^d.$$

Assume that $A \in C^0(\T^d;{\rm SL}(2, \R))$ is homotopic to the identity. It induces the projective skew-product $F_A\colon \T^d \times \mathbb{S}^1 \to \T^d \times \mathbb{S}^1$ with
$$
F_A(x,w):=\left(x+\a,\, \frac{A(x) \cdot w}{|A(x) \cdot w|}\right).
$$
In other words, $F_A\colon \T^d\times\T\to \T^d\times\T$ can be expressed as $(x,y)\mapsto (x+\alpha,y+\phi_x(y))$, where $\phi_x\colon \R\to \R$ is a $1$-periodic continuous function (defined modulo translations by integers on both copies of $\R$). Let $\mu$ be any probability measure on $\T^d \times \T$ invariant under ${F}_A$ and whose projection onto the coordinate $x$ is given by the Lebesgue measure. The number
\beq
\label{eq_rho_def}
\rho(\alpha,A):=\int_{\T^d \times \T} \phi_x(y)\ d\mu(x,y) \ {\rm mod} \ \Z
\eeq 
depends  neither on the lift $\phi$ nor on the measure $\mu$, and is called the \textit{fibered rotation number} of $(\alpha,A)$ (see \cite{H,JM} for more details; see also \cite[Appendix A]{AK} for a detailed exposition). Let $R_{\theta}$ denote the rotation matrix
\beq
\label{eq_rotation_def}
R_\theta:=
\begin{pmatrix}
\cos2 \pi\theta & -\sin2\pi\theta\\
\sin2\pi\theta & \cos2\pi\theta
\end{pmatrix},\quad \theta\in \T.
\eeq
Any continuous map $A\colon \T^d\to{\rm SL}(2,\R)$ is homotopic to $x \mapsto R_{n\cdot x}$ for a unique $n\in\Z^d$. We call $n$ the {\it degree} of $A$ and denote it by $\deg A$.
The fibered rotation number is invariant under real conjugacies which are homotopic to the identity. More generally, if $(\alpha,A_1)$ is conjugated to $(\alpha, A_2)$, i.e., $B(x+\alpha)^{-1}A_1(x)B(x)=A_2(x)$, for some $B \colon \T^d\to{\rm PSL}(2,\R)$ with $\deg{B}=n$, then
\begin{equation}\label{rotation number}
\rho(\alpha, A_1)= \rho(\alpha, A_2)+n\cdot\alpha.
\end{equation}
A typical  example of a quasiperiodic cocycle is a \textit{Schr\"{o}dinger cocycle} $(\alpha,S_{E-v})$, where
$$
S_{E-v}(x):=
\begin{pmatrix}
E-v(x) & -1\\
1 & 0
\end{pmatrix},   \quad E\in\R.
$$
Any formal solution $\psi=\{\psi(n)\}_{n \in \Z}$ of the eigenvalue equation $H_x \psi=E\psi$, where $H_x$ is the quasiperiodic Schr\"odinger operator \eqref{h_def}
$$
(H_{x}\psi)(n)=\psi(n-1)+\psi(n+1)+v(x+n\alpha)\psi(n),\ \ n\in\Z,\,\, x\in \T^d,
$$
satisfies the following relation with $S_{E-v}(x)$:
$$
\begin{pmatrix}
\psi_{n+1}\\
\psi_n
\end{pmatrix}
= S_{E-v}(x+n\alpha) \begin{pmatrix}
\psi_{n}\\
\psi_{n-1}
\end{pmatrix},\quad \forall n \in \Z.
$$
It is well known that the spectrum $\sigma(H_{x})$, denoted by $\Sigma_{\alpha,v}$, is a compact subset of $\R$, independent of $x$ if $\{1,\alpha_1,\ldots,\alpha_d\}$ are rationally independent.
The spectral properties of $H_{x}$ and the dynamics of $(\alpha,S_{E-v})$ are related by the Johnson's theorem \cite{John}: $E\in \Sigma_{\alpha,v}$ if and only if $(\alpha,S_{E-v})$ is \textit{not} uniformly hyperbolic. Throughout the paper, we will use the notation $L(E)=L(\alpha,S_{E-v})$  and $\rho(E)=\rho(\alpha,S_{E-v})$ for brevity.

\subsection{Reducibility of quasiperiodic cocycles} We will only consider cocycles $(\alpha,A)$ with $\deg A=0$. A quasiperiodic $\rC^s$-cocycle $(\alpha,A)$ with $\{1,\alpha_1,\ldots,\alpha_d\}$ rationally independent is called {\it $\rC^s$-rotations reducible} if there exists $B\in \rC^s(\T^d;\SL(2,\R))$ and $\theta\in \rC^s(\T^d;\R)$  such that
\beq
\label{eq_reducible_def}
B(x+\alpha)^{-1}A(x)B(x)=R_{\theta(x)}.
\eeq
We will call a cocycle {\it reducible} if it is rotations reducible to a constant rotation. In this case, one can choose $\theta\equiv\rho(\alpha,A)$. For reducible cocycles, it will be more convenient diagonalize the rotation matrix and consider $B\in \rC^s(\T^d;\SL(2,\C))$ satisfying
\beq
B(x+\alpha)^{-1}A(x)B(x)=\begin{pmatrix}e^{2\pi i\rho(\alpha,A)}&0\\0&e^{-2\pi i \rho(\alpha,A)}\end{pmatrix}.
\eeq
Note that our use of the definition is more narrow than usual. More accurately, we should have used the wording ``reducible to a constant rotation''. Usually, one considers reducibility to a general constant matrix in the right hand side of \eqref{eq_reducible_def}.

Let $\{H_x\}_{x\in \T^d}$ be a quasiperiodic operator family, and $(\alpha,S_{E-v})$ be the corresponding Schr\"odinger cocycle. Define the following subset:
\begin{multline*}
\cR_{\alpha,v}^s=\{E\in \R\colon (\alpha,S_{E-v})\text{ is $\rC^s$-reducible}\}\\
\subset \cRR^s_{\alpha,v}=\{E\in \R\colon (\alpha,S_{E-v})\text{ is $\rC^s$-rotations reducible}\}.
\end{multline*}
We will sometimes drop the indices and simply use $\cR$ or $\cRR$, if the values of the indices are clear from the context.

From Shnol's theorem \cite{Schnol1,Schnol2,Schnol3,Schnol4}, it follows that $\cRR_{\alpha,v}^s\subset\Sigma_{\alpha,v}$. Moreover, subordinacy theory \cite{KP,JL_sub2,JL_sub3} implies that {\it the restriction of the spectral measure of $H_x$ into $\cRR_{\alpha,v}^s$ is purely absolutely continuous} for any $s\ge 0$. The same also holds for $\cR_{\alpha,v}^s$.

We will also need some conventions about normalizations of the cocycles in $\rL^2(\T^d)$. Let us rewrite the reducibility equation \eqref{eq_reducible_def_sec3} as
\begin{small}
$$
\begin{pmatrix}
(E-v(x))b_{11}(x)-b_{21}(x)& (E-v(x))b_{12}(x)-b_{22}(x)\\
b_{11}(x)&b_{12}(x)
\end{pmatrix}=\begin{pmatrix}
e^{2\pi i \theta}b_{11}(x+\alpha)&e^{-2\pi i \theta}b_{12}(x+\alpha)\\
e^{2\pi i \theta}b_{21}(x+\alpha)&e^{-2\pi i \theta}b_{22}(x+\alpha)
\end{pmatrix}.
$$
\end{small}
One can see that the columns of $B(x)$ are not intertwined, and one can multiply one column and divide another by the same constant without affecting the determinant. Note also that 
$\|b_{11}\|_{\rL^2(\T^d)}=\|b_{21}\|_{\rL^2(\T^d)}$, $\|b_{12}\|_{\rL^2(\T^d)}=\|b_{22}\|_{\rL^2(\T^d)}$. As a consequence, we can choose a constant so that the columns are ``balanced'':
$$
\|b_{11}\|_{\rL^2(\T^d)}=\|b_{21}\|_{\rL^2(\T^d)}=\|b_{12}\|_{\rL^2(\T^d)}=\|b_{22}\|_{\rL^2(\T^d)},
$$
without affecting the regularity of the matrix $B$ in the variable $x$. So, we would have
\beq
\label{eq_balanced}
\|B\|_{\rL^2(\T^d)}^2=4\|b_{ij}\|_{\rL^2(\T^d)}^2,\quad \forall i,j\in\{1,2\},
\eeq
where in the left hand side we are using the Hilbert--Schmidt matrix norm. In the statements of the main results, we will always assume that the conjugation matrix $B$ is balanced in the above sense. Also, we will not always require $\det B(x)=1$, but sometimes instead choose $B$ to be $\rL^2$-normalized (and balanced).

\subsection{Statements of the results}In order to formulate the main result, we will need the definition of the {\it density of states measure} of the operator family $\{H_x\}_{x\in \T^d}$: for a  Borel subset $B\subset \R$, define
\beq
\label{eq_ids_average}
N(B)=\int_{\T^d} \<\one_{B}(H_x)\delta_0,\delta_0\>\,dx.
\eeq
In other words, the density of states measure is the expectation value of the spectral measures. We also introduce the {\it integrated density of states} (denoted by the same symbol with a slight abuse of notation):
\beq
\label{eq_ids_def}
N(E):=N((-\infty,E])=N((-\infty,E)),
\eeq
the cumulative distribution function of the density of states measure. It is well known that $N$ is a non-decreasing continuous function of $E$. Clearly, if the spectral measures are absolutely continuous, then the IDS is also absolutely continuous (with respect to the Lebesgue measure). The IDS is related to the fibered rotation number defined above in \eqref{eq_rho_def} in the following way \cite{DS}:
$$
N(E)=1-2\rho(E).
$$
Let $\KK\subset \R$ be a Borel subset. The following function will be important:
\beq
\label{eq_g_def}
g_{\KK}(E)=\begin{cases}\frac{1}{\pi N'(E)},&\quad E\in \KK\\
0,&E\in \R\backslash\KK.	
\end{cases}
\eeq
Note that  $g_{\KK}(E)$ is well defined (Lebesgue) almost everywhere on $\KK\cap\Sigma_{\alpha,v}$. As a consequence, the operator $g_{\KK}(H_x)$ is well defined as long as $H_x$ has purely absolutely continuous spectrum on $\KK$.

For a (Borel) subset $\KK\subset \R$, denote by $\one_{\KK}(x)$ the indicator function of $\KK$. If $H$ is a self-adjoint operator on $\ell^2(\Z)$, denote by $H(\KK)$ the restriction of $H$ into the subspace $\Ran \one_{\KK}(H)\subset \ell^2(\Z)$. Here, $\one_{\KK}(H)$ is considered in the standard sense of functional calculus for self-adjoint operators. For the current operator $A$ defined in \eqref{A_def}, let
$$
A(x,\KK):=\one_{\KK}(H_x)A\one_{\KK}(H_x).
$$
In the case of $A(x,\KK)$, it is convenient not to restrict it into $\Ran \one_{\KK}(H_x)$ and instead let it have a zero block.

We are ready to formulate the first (local) main result of the paper.
\begin{theorem}\label{main} Let $\{H_x\}_{x\in \T^d}$ be a quasiperiodic operator family with $v\in \rC^s(\T^d;\R)$, $s>d$. Denote by $\cR$ the set of energies on which the corresponding Schr\"odinger cocycle is $\rC^s$-reducible. Then $H_x$ has local ballistic transport on $\cR$. In other words, there exists a representation $\cR=\cup_j\KK_j$ such that the following limit exist for all $x\in \T$ and all $\KK_j$:
\begin{equation*}
Q(x,\KK_j)=\slim\limits_{T\to +\infty}\frac{1}{T}\int_0^T e^{itH_x}A(x,\KK_j)e^{-itH_x}\,dt=g_{\KK_j}(H_x).
\end{equation*}
As a consequence,
$$
\ker{Q(x,\KK_j)}=(\Ran \one_{\KK_j}(H_x))^{\perp}.
$$
\end{theorem}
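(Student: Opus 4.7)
The plan is to implement the strategy sketched in the introduction. Exhaust $\cR$ by a nested Borel sequence $\KK_j$ on which the $\rC^s$-reducibility conjugation can be chosen uniformly bounded in $\rC^s$-norm, and then use Aubry duality to transfer reducibility into a localization-type estimate for a dual operator family on $\ell^2(\Z^d)$. The hypothesis $s>d$ is exactly what makes the Fourier coefficients of a $\rC^s$-function on $\T^d$ summable, and it is this $\ell^1$-tail control that will promote the $\rL^2$-averaged convergence established in \cite{Kach1} to strong operator convergence pointwise in the phase $x$.

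Concretely, let $\KK_j$ denote the set of $E\in\cR$ admitting a balanced reducibility conjugation $B_E\in\rC^s(\T^d;\SL(2,\C))$ with $\|B_E\|_{\rC^s}\le j$; a standard measurable-selection argument ensures that $E\mapsto B_E$ can be chosen Borel-measurably and yields a Borel set $\KK_j$, with $\cR=\bigcup_j\KK_j$. Dualizing as in \cite{jk2,Kach1}, the columns of $B_E$ at rotation number $\theta=\rho(E)$ Fourier-expand to eigenfunctions $u_E$ of the dual operator $L_\theta$ at eigenvalue $E$, satisfying the uniform polynomial bound
\beq
|u_E(n)|\le C_j(1+|n|)^{-s},\quad E\in\KK_j,\ n\in\Z^d,
\eeq
which is $\ell^1$-summable on $\KK_j$ because $s>d$. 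Expressing $\<\delta_m,\ e^{itH_x}A(x,\KK_j)e^{-itH_x}\delta_n\>$ via the associated covariant Bloch-wave representation produces a double integral over rotation variables $\theta,\theta'$ of products of $u_{E(\theta)}(m)$, $u_{E(\theta')}(n)$, the symbol of $A$, and phases $e^{it(E(\theta)-E(\theta'))}$. Cesaro averaging in $T$ annihilates the oscillating off-diagonal piece $\theta\neq\theta'$; in the surviving diagonal term the change of variable $E=E(\theta)$ produces the Jacobian $1/|dE/d\theta|$, which equals $1/(\pi N'(E))=g_{\KK_j}(E)$ by the Thouless-type identity $N(E)=1-2\rho(E)$. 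The uniform tail decay upgrades the limit from $\rL^2$-average over $x$ to the strong operator topology, with matrix-element control uniform in $(m,n)$.

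The main obstacle is promoting the resulting convergence from almost every $x\in\T^d$ to every $x$. Aubry duality is a Fourier-type procedure that is insensitive to measure-zero sets of phases, and one cannot expect the dual operator to be localized at every $\theta\in\T$ (cf.\ \cite{js}). I plan to overcome this by combining almost-everywhere convergence with a two-sided equicontinuity argument. The candidate limit $x\mapsto g_{\KK_j}(H_x)$ is strongly continuous thanks to the quantitative continuity of absolutely continuous spectral measures on $\KK_j$ developed in Section~5 of the paper. The pre-limit averages
$$
\frac{1}{T}\int_0^T e^{itH_x}A(x,\KK_j)e^{-itH_x}\,dt
$$
should then be shown to be strongly equicontinuous in $x$ uniformly in $T$: the $\ell^1$-tail control from the duality step reduces any matrix element to a finite box where continuity is manifest. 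A.e.\ convergence plus equicontinuity in the strong operator topology gives convergence for every $x$. Finally, $g_{\KK_j}$ vanishes off $\KK_j$ and is strictly positive on $\KK_j\cap\Sigma_{\alpha,v}$, since $\KK_j\subset\cR\subset\cRR^s$ implies the IDS is absolutely continuous there and hence $N'>0$ a.e.\ on $\KK_j\cap\Sigma_{\alpha,v}$; this yields the kernel identification $\ker Q(x,\KK_j)=(\Ran\one_{\KK_j}(H_x))^\perp$.
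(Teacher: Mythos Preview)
Your overall strategy matches the paper's closely: exhaust $\cR$ by the sets $\KK_j=\{E\in\cR:\|B_E\|_{\rC^s}\le j\}$, pass by Aubry duality to the family $\{L_\theta\}$ where uniform $\rC^s$-reducibility becomes uniform polynomial decay of eigenfunctions, identify the diagonal limit as $g_{\KK_j}$ via the Kotani-theoretic computation (this is Proposition~\ref{prop_qtilde}), and then upgrade convergence from a.e.\ $x$ to every $x$.

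The gap is in that last step. You propose to show that the pre-limit averages are strongly equicontinuous in $x$ uniformly in $T$, arguing that ``the $\ell^1$-tail control from the duality step reduces any matrix element to a finite box where continuity is manifest.'' But continuity of a single matrix element $\langle\delta_m,Q(x,T,\KK_j)\delta_n\rangle$ in $x$ is \emph{not} uniform in $T$: the integrand contains $e^{itH_x}$ for $t\in[0,T]$, and by Duhamel the modulus of continuity of $x\mapsto e^{itH_x}\psi$ degrades linearly in $t$. The $\ell^1$-tail control constrains spreading in the lattice variable $n\in\Z$; it says nothing about oscillation in the phase variable $x\in\T^d$. So restricting to a finite box does not deliver the uniform-in-$T$ equicontinuity your argument needs, and a.e.\ convergence plus separate continuity of each term is not enough to conclude.

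The paper's route around this (Lemma~\ref{lemma_L1} and the chain \eqref{eq_ell2_pre}) is not an equicontinuity argument but a direct $x$-independent bound:
\[
\sup_{x\in\T^d}\|Q(x,T,\KK_j)\delta_p-g_{\KK_j}(H_x)\delta_p\|_{\ell^2(\Z)}\le \bigl\|\widetilde{\mathcal Q}(T,\KK_j)\delta_0-\widetilde{\mathcal Q}(\KK_j)\delta_0\bigr\|_{\ell^1(\Z^d;\rL^2(\T))}.
\]
Continuity in $x$ of \emph{both} the pre-limit and the limit (this is the $\KK$-regularity established in Section~5, via Theorem~\ref{th_regular}) enters only to replace $\esssup_x$ by $\sup_x$ after applying the elementary Fourier bound $\|\hat f\|_{\rL^\infty}\le\|\widetilde f\|_{\ell^1}$. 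On the dual side one then invokes the uniform localization of Lemma~\ref{lemma_uniform_localization}: the propagator kernel $h(n)=\esssup_\theta|\langle\delta_0,\one_{\KK_j}(L_\theta)e^{itL_\theta}\delta_n\rangle|$ decays polynomially, and $h*h\in\ell^1(\Z^d)$ precisely when $s>d$, giving uniform-in-$T$ smallness of $\|(1-P_N)\widetilde{\mathcal Q}(T,\KK_j)\delta_0\|_{\ell^1(\Z^d;\rL^2(\T))}$. Convergence on the remaining finite box comes from the strong convergence in $\rL^2(\T\times\Z^d)$ already furnished by Proposition~\ref{prop_qtilde}. The upshot is \emph{uniform} convergence in $x$, obtained without ever controlling the $x$-modulus of continuity of $Q(x,T,\KK_j)$ uniformly in $T$.
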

Theorem \ref{main} is ``soft'' and requires very little regularity. As a consequence, we can only get local bounds. Still, as stated in the Introduction, even these bounds imply ballistic motion such as in \cite{zhao1}. If $\cR$ has 

If one has some control over the dependence of $\|B\|_{\rC^s}$ in the $E$ variable, the result can be improved to ``true'' strong ballistic transport. Unfortunately, there is no hope in getting any kind of estimates that are uniform in energy, since regularity of the reducibility matrix depends on Diophantine properties of the rotation number (see, for example, \cite{gyz}). However, we can formulate a sufficient integral-type condition. We will say that $(\alpha,S_{v-E})$ is {\it $\rC^s$-reducible in expectation on $\KK$} if it's $\rC^s$-reducible for every $E\in \KK$, and there exists a choice of $\rL^2(\T^d)$-normalized conjugations $B(E;x)$ such that
\beq
\label{eq_integral_condition}
\int_{\KK}\|B(E;\cdot)\|_{\rC^s(\T^d)}^4\,d\rho(E)<+\infty.
\eeq
We can now formulate the second (global) main result.
\begin{theorem}\label{main2} Let $\{H_x\}_{x\in \T^d}$ be a quasiperiodic operator family whose cocycles are is $\rC^s$-reducible in expectation on $\KK$ for some $s>5d/2$. Then the family $\{H_x\}_{x\in \T^d}$ has strong ballistic transport on $\KK$. In other words,  the following limit exists for all $x\in \T$:
\begin{equation*}
Q(x,\KK)=\slim\limits_{T\to +\infty}\frac{1}{T}\int_0^T e^{itH_x}A(x,\KK)e^{-itH_x}\,dt=g_{\KK}(H_x).
\end{equation*}
As a consequence,
$$
\ker{Q(x,\KK)}=(\Ran \one_{\KK}(H_x))^{\perp}.
$$
\end{theorem}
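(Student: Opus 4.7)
The plan is to reduce everything to a quantitative ``dynamical localization in expectation'' bound for the Aubry dual family and then turn that bound into strong (pointwise in $x$) ballistic transport via the duality machinery already used in \cite{Kach1,jk2}. As a first step, I would Fourier--transform the family $\{H_x\}_{x\in\T^d}$ in the phase variable, obtaining a direct integral of dual operators $\{L_\theta\}_{\theta\in\T}$ acting on $\ell^2(\Z^d)$. Under reducibility, each $L_\theta$ has, on the image of $\KK$, pure point spectrum whose eigenfunctions can be written explicitly in terms of the Fourier coefficients of the conjugation matrix $B(E;\cdot)$, with the labelling parameter controlled by the rotation number $\rho(E)$. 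After unwinding \eqref{eq_ballistic_K_def}, the question becomes whether the Ces\`aro average of the Heisenberg evolution of $A(x,\KK)$ converges in the strong operator topology on $\ell^2(\Z)$ for every single $x$, with the claimed limit $g_\KK(H_x)$; the proof of Theorem~\ref{main} already identifies this limit at the level of matrix elements and for a.e.\ $x$, so the new work is (i) obtaining pointwise-in-$x$ convergence and (ii) replacing the uniform $\|B\|_{\rC^s}$-control on each $\KK_j$ by the global integral hypothesis \eqref{eq_integral_condition}.

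Next I would isolate the key dynamical quantity. In the dual picture, the matrix element $\langle\delta_n,e^{itH_x}A(x,\KK)e^{-itH_x}\delta_m\rangle$ becomes a $\theta$-integral of expressions involving $\langle\delta_{m'},e^{itL_\theta}\delta_{n'}\rangle$ weighted by characters in $x$. The discrepancy between the Ces\`aro average and $g_\KK(H_x)$ is then controlled by sums of integrals of the form
\[
I_{m',n'}\;=\;\int_\T \l|\langle\delta_{m'},\,e^{itL_\theta}\delta_{n'}\rangle\r|\,d\theta.
\]
Expanding $e^{itL_\theta}$ in the Bloch-wave eigenbasis of $L_\theta$, this integral is dominated by a convolution of the Fourier coefficients of $B(E;\cdot)$. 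Cauchy--Schwarz in $E$ (against $d\rho$) and Sobolev embedding in $\T^d$ turn that convolution into polynomial decay in $|m'-n'|$, with constant bounded by $\int_\KK\|B(E;\cdot)\|_{\rC^s(\T^d)}^4\,d\rho(E)$, which is finite by \eqref{eq_integral_condition}. Tracking the exponents, the need to use only the $r=1$ norm on the $\theta$-integral, combined with two Cauchy--Schwarz steps and one Sobolev embedding, forces the threshold $s>5d/2$.

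With this dynamical-localization-in-expectation bound in hand, the $\ell^1$-summability of the resulting tails lets one upgrade the weak convergence in $\rL^2(\T^d;\ell^2(\Z))$ coming from the $x$-averaged argument of \cite{Kach1} to strong convergence in $\rL^\infty(\T^d;\ell^2(\Z))$. A direct calculation on the diagonal identifies the limiting operator as $g_\KK(H_x)$, and the kernel condition $\ker Q(x,\KK)=(\Ran\one_\KK(H_x))^\perp$ follows from strict positivity of $g_\KK$ on $\KK\cap\Sigma_{\alpha,v}$ (where $N'>0$). Since duality inherently ignores null sets of $x$, this yields the conclusion only for a.e.\ $x$. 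To recover every $x$, I would show that the pre-limit operator is continuous in $x$ in the strong operator topology, uniformly in $T$ (using the elementary ballistic upper bound from \eqref{evolution}), and that $g_\KK(H_x)$ is strongly continuous in $x$ by the quantitative continuity of absolutely continuous spectral measures on $\KK$ developed in Section 5; the a.e.\ identity then propagates to all $x$ by density.

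The hard part will be the sharp bound on $I_{m',n'}$. The natural object to estimate is $\int_\T|\langle\delta_{m'},e^{itL_\theta}\delta_{n'}\rangle|^r\,d\theta$ with $r>1$, which would yield markedly better decay and very likely a regularity threshold close to $s>d$, as in Theorem~\ref{main}; however, the eigenfunction representation coming from reducibility only produces clean convolution estimates at $r=1$, and I do not see how to exploit the extra room that higher $r$ would give. Absorbing this loss is what costs the extra $3d/2$ in the smoothness requirement; any serious improvement of Theorem~\ref{main2} would have to find a way to use cancellations in the $\theta$-integral that this approach cannot see. Everything else---identification of the limit, triviality of $\ker Q$, and passage from a.e.\ $x$ to all $x$---is local and either has already been handled in the proof of Theorem~\ref{main} or follows routinely once the dynamical bound is in place.
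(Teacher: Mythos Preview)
Your proposal is correct and follows essentially the same route as the paper: pass to the Aubry dual, extract the dynamical-localization-in-expectation bound on $\int_\T|\langle\delta_{m'},\one_{\KK}(L_\theta)e^{itL_\theta}\delta_{n'}\rangle|\,d\theta$ from the convolution of Fourier coefficients of $B(E;\cdot)$ (the paper packages this as a Sobolev bound on the sequence $\psi_*(\theta;p)$ in Theorem~\ref{main_localization}(4) and Theorem~\ref{th_sobolev}), and then use $\ell^1$-summability of the tails in the dual discrete variable to upgrade $\rL^2(\T^d;\ell^2(\Z))$ convergence to $\rL^\infty(\T^d;\ell^2(\Z))$ convergence, exactly as in Lemma~\ref{lemma_L1} and \eqref{eq_ell2_pre}. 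The arithmetic leading to $s>5d/2$ (two square-root steps and one convolution in weighted $\ell^2$, then H\"older to land in $\ell^1$) is also the paper's, and your remark about the lost room at $r>1$ matches the discussion in \S1.3.

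One small clarification on the last step: you do not need, and will not get, strong continuity of the pre-limit $Q(x,T,\KK)$ in $x$ \emph{uniformly in $T$}; the oscillatory factors $e^{itH_x}$ preclude equicontinuity. What the paper actually uses (Remark after Lemma~\ref{lemma_L1}) is just that for each fixed $T$ the map $x\mapsto Q(x,T,\KK)\delta_p$ is continuous, and that $x\mapsto g_\KK(H_x)\delta_p$ is continuous (this is $\KK$-regularity, established in Section~5). Continuity of both sides turns the $\esssup_x$ coming from the $\ell^1$ bound on Fourier coefficients into a genuine $\sup_x$, so the $\rL^\infty$ upgrade already delivers convergence for every $x$ in one stroke; there is no separate ``a.e.\ then extend'' step.
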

We will also state a version of Theorem \ref{main2} in terms of the localization property of the dual operator
$$
(L_{\theta}\psi) (n)=\sum_{m\in \Z^d}\hat{v}_{n-m}\psi(m)+2\cos 2\pi (n\cdot\alpha+\theta)\psi(n),\quad n\in\Z^d.
$$
We will say that the family $\{L_{\theta}\}_{\theta\in \T}$ has {\it $s$-power law dynamical localization} (sPDL) on $\KK$, if the spectra of $L_{\theta}(\KK)$ are purely point for almost every $\theta\in\T$, and there are $C>0$, $s>0$ such that
$$
\int_{\T}|\<\delta_{m},\one_{\KK}(L_{\theta})e^{it L_{\theta}}\delta_n\>|\,d\theta\le \frac{C}{(1+|m-n|)^s}.
$$
The following is a corollary of the proof of Theorem \ref{main2}.
\begin{cor}
\label{cor_main3}
Let $\{H_x\}_{x\in \T^d}$ be a quasiperiodic operator family whose cocycles are $\rC^1$-reducible on $\KK$, and the dual family $\{L_{\theta}\}_{\theta\in\T}$ satisfies $s$-power law dynamical localization on $\KK$ with some $s>4d$. Then the family $\{H_x\}_{x\in \T^d}$ has strong ballistic transport on $\KK$.
\end{cor}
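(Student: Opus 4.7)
The plan is to retrace the proof of Theorem~\ref{main2} and observe that the integral condition \eqref{eq_integral_condition} enters only through a single quantitative ingredient: a power-law dynamical localization bound in expectation for the dual family $\{L_\theta\}_{\theta\in\T}$. Once that bound is in hand, no further quantitative use of the Bloch waves $B(E;x)$ is made. The sPDL hypothesis of the corollary supplies precisely such a bound directly, so the argument becomes a specialization of the proof of Theorem~\ref{main2}, with the $\rC^s$-regularity of the conjugation needed only at its qualitative $\rC^1$ level wherever possible.

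Under Aubry duality, $\rC^1$-reducibility on $\KK$ yields a covariant representation for eigenfunctions of $L_\theta$ within $\Ran \one_\KK(L_\theta)$ in terms of $B(E;\cdot)$ and the rotation number $\rho(E)$; this is enough regularity to give meaning to the dual correspondence and to identify the diagonal of the Cesaro-averaged operator with $g_\KK(H_x)$, but no quantitative $\rC^s$ norms of $B$ are required. The pre-limit expression $\frac{1}{T}\int_0^T e^{itH_x}A(x,\KK)e^{-itH_x}\,dt$ may then be rewritten, after duality, so that its matrix elements involve integrals of the form
$$
\int_\T \bigl|\bigl\langle \delta_m, \one_\KK(L_\theta) e^{itL_\theta}\delta_n\bigr\rangle\bigr|\,d\theta.
$$
The sPDL hypothesis with $s>4d$ bounds each such integral by $C(1+|m-n|)^{-s}$, and this room absorbs the two discrete convolutions in $\Z^d$ (one coming from the Fourier support of the current operator $A$ together with the $B(E;x)$-dependence under the $\rC^1$-reducibility, and one from the structure of the long-range operator $L_\theta$) which arise when transferring the estimate back to a bound pointwise in $x\in\T^d$. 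Dominated convergence then yields the strong-operator limit uniformly in $x$, and identification with $g_\KK(H_x)$ proceeds exactly as in Theorem~\ref{main2}: the Cesaro mean projects onto the diagonal in the reducibility basis, and the identity $N(E)=1-2\rho(E)$ converts the diagonal into $1/(\pi N'(E))$. Continuity in $x$ is then recovered by the same uniform-continuity-in-strong-operator-topology comparison used in Section~5 of the paper, and positivity of $g_\KK$ on $\KK\cap \Sigma_{\alpha,v}$ gives $\ker Q(x,\KK)=(\Ran\one_\KK(H_x))^\perp$ at once.

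The main obstacle I expect is the bookkeeping of the exponent $s>4d$. In Theorem~\ref{main2} the $\rC^s$ condition with $s>5d/2$ is used both to derive a pointwise-in-$\theta$ dynamical localization bound from the $\rL^4$-in-$E$ average in \eqref{eq_integral_condition} (a Chebyshev-type step that costs some derivatives) and to carry out subsequent convolution estimates in $\Z^d$ on the way to the $\ell^1$ tail control. When sPDL is assumed as a hypothesis, the first cost disappears, but the remaining convolutions still consume roughly $d$ derivatives each; verifying that the count is exactly met by $s>4d$ is the technical point that makes this a \emph{corollary of the proof} rather than of the statement of Theorem~\ref{main2}, and I would expect this to be the only nontrivial step in turning the proposal into a full proof.
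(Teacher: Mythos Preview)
Your proposal is correct and matches the paper's approach: rerun the proof of Theorem~\ref{th_main_abstract} with the sPDL bound $h(n)\le C(1+|n|)^{-s}$ inserted directly into the $\ell^1(\Z^d;\rL^2(\T))$ tail estimate, using the elementary convolution Lemma~\ref{lemma_convolution} in place of the Sobolev/weighted-$\ell^2$ machinery (the paper's own proof is the single sentence in \S4.1). One correction to your bookkeeping: the exponent losses do not come from ``the Fourier support of $A$'' or ``the long-range structure of $L_\theta$'' as you write, but from the $\rL^2\!\to\!\rL^1$ passage in $\theta$ in \eqref{eq_triangle}, the resolution of identity $\sum_k|\delta_k\rangle\langle\delta_k|$ inserted around the diagonal operator $\widetilde A(\theta)$, and the final $\ell^1(\Z^d)$ summation; the $\rC^1$-reducibility is used only for $\KK$-regularity (Section~5) and the identification of the limit in Proposition~\ref{prop_qtilde}, and costs no decay at all.
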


The assumptions of Theorems \ref{main} and/or \ref{main2} are satisfied for several different classes of operators. In order to formulate some of them, recall that a frequency vector $\alpha\in \R^d$ is called {\it Diophantine} (denoted $\alpha\in \dc_d(\gamma,\tau)$ for some $\gamma>0,\tau>d-1$) if
\begin{equation}
\label{dc_def}
\dist(k\cdot\alpha,\Z)\ge \gamma|k|^{-\tau},\quad \forall k\in \Z^d\backslash\{0\}.
\end{equation}
We will use the notation 
$$
\dc_d=\bigcup_{\gamma>0;\,\tau>d-1} \dc(\gamma,\tau).
$$ 
In the one-frequency case $\alpha\in \R\backslash\Q$, denote also
$$
\beta(\alpha)=\limsup_{k\to \infty}\frac{\ln q_{k+1}}{q_k},
$$
where $\frac{p_k}{q_k}\to \alpha$ are the continued fraction approximants. Note that $\alpha\in \dc_1$ implies $\beta(\alpha)=0$, but not vice versa. 
\begin{remark}
\label{remark_sobolev}
Condition \eqref{eq_integral_condition} is formulated in terms of $\rC^s$-norms in order to avoid overloading this section with terminology. In fact, in the proof we will use (weaker) Sobolev $\rH^s$-norms, since they behave better under some convolution-type operations appearing in the process.
\end{remark}
\subsection{Applications of Theorems \ref{main} and \ref{main2}} As stated earlier, Theorem \ref{main} falls in the middle between ballistic motion and strong ballistic transport. Its advantage is that it is applicable in a wide range of situations.
\begin{enumerate}
	\item Let $v\in C^{\omega}(\T;\R)$ be an analytic one-frequency potential, and $\beta(\alpha)=0$. Then there exists a Borel subset $\Sigma\subset \R$ such that $\Sigma$ supports the absolutely continuous components of the spectral measures of $H_x$ for all $x\in \T$, and the corresponding Schr\"odinger cocycle $S_{E-v}$ is analytically rotations reducible for all $E\in \Sigma$ due to \cite[Theorem 1.2]{AFK}. Since $\beta(\alpha)=0$, by solving the cohomological equation, one can improve rotations reducibility to reducibility for all $E\in\Sigma$. Thus, Theorem \ref{main} applies. One can state its conclusion in the following way: if $\{H_x\}_{x\in \T}$ is an analytic one-frequency quasiperiodic operator family with $\beta(\alpha)=0$ and $\Sigma$ does not support singular spectral measures of $H_x$, then $H_x$ has local ballistic transport and, as a consequence, has ballistic motion on $\Sigma$.
	\item In \cite{ck_zhao}, some of the results \cite{AK,AFK} were extended to the case of finitely smooth cocycles. As a consequence, the results from the previous case also extend to finitely differentiable potentials.
\item In \cite{ayz}, it was shown that almost Mathieu operators with potentials $v(x)=2\lambda\cos(2\pi x)$ with $\log\lambda<-\beta(\alpha)$ satisfy full measure analytic reducibility. As a consequence, they also satisfy local ballistic transport (and hence ballistic motion) on the whole spectrum. The corresponding quantitative localization results for the dual operator exist \cite{JitoKruLiu,JLiu,JLiu_annals} exist, but are very delicate.
\end{enumerate}
Let us now discuss applications of the more precise Theorem \ref{main2} and Corollary \ref{cor_main3}.

\begin{enumerate}
\item In \cite{E92,Amor} it was shown that if $\alpha\in \dc_d$, $v\in C^{\omega}(\T^d;\R)$, and $0<\varepsilon<\varepsilon_0(\alpha,v)$, then the operators $H_x$ with the potential $\varepsilon v$ have purely absolutely continuous spectra (see also \cite{zhao1}), and their Schrodinger cocycles are reducible on a set of energies of full spectral measure. In \cite{gyz}, exponential dynamical localization in expectation (which is stronger that sPDL for all $s$) was established for the corresponding dual operators. Therefore, Corollary \ref{cor_main3} applies. The proof of \cite{gyz} is based on quantitative reducibility estimates. It can be checked that these estimates, actually, guarantee convergence of the integral \eqref{eq_integral_condition} (within a significant margin), and therefore one can also apply Theorem \ref{main2} directly. Therefore, in the setting of \cite{zhao1}, we actually have strong ballistic transport on the whole spectrum, rather than just ballistic motion.
\item A combination of \cite{AFK} and \cite{aj} implies that, if $v\in C^{\omega}(\T;\R)$, $\beta(\alpha)<+\infty$, and $0<\varepsilon<\varepsilon_0(v,\beta(\alpha))$, then the operators $\{H_x\}_{x\in \T}$ with potentials $\varepsilon v$ have purely absolutely continuous spectrum, and the corresponding Schr\"odinger cocycles are reducible for almost every energy. Exponential dynamical localization for the dual operators has been established in \cite{JitoKru} (as stated, only for the almost Mathieu operator, but the argument easily extends to the general long range case, since it relies on \cite[Theorem 5.1]{aj} which is established for the long range case; see also \cite{gyz} for the Diophantine case). Therefore, again, Corollary \ref{cor_main3} implies strong ballistic transport on the whole $\ell^2(\Z)$. Note that, for $\beta=0$, it gives a non-perturbative version of the result of \cite{zhao1}, also with strong ballistic transport.
\end{enumerate}
\section{On reducibility and localization}
In this section, we will refine some of the results from \cite{jk2} in order to extend them to a local quantitative setting. For a function $f\in \rL^2(\T^d)$, denote the Sobolev norm by
$$
\|f\|_{\rH^s(\T^d)}^2=\sum_{m\in \Z^d}(1+|m|)^{2s}|\hat f (m)|^2,
$$
where $\hat{f}(m)$ are the Fourier coefficients:
$$
f(x)=\sum_{m\in \Z^d}\hat f(m)e^{2\pi m\cdot x}.
$$
We will only consider $s>d/2$, in which case $\rH^s$ is embedded into $C(\T^d)$ and its elements are ordinary continuous functions, rather than equivalence classes. Coincidentally, the same condition is sufficient for $\rH^s$ being an algebra with respect to the pointwise multiplication, which will also be important; see, for example \cite[Theorem 4.39]{Adams}.
\begin{prop}
\label{prop_multiplicative_sobolev}
Let $s>d/2$ and $f,g\in \rH^s(\T^d)$. Then $fg\in \rH^s(\T^d)$, and
$$
\|fg\|_{\rH^s(\T^d)}\le C(d,s)\|f\|_{\rH^s(\T^d)}\|g\|_{\rH^s(\T^d)}.
$$
\end{prop}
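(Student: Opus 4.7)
The plan is to verify this classical multiplicative Sobolev inequality by working directly in Fourier space. Passing to Fourier coefficients turns pointwise multiplication into convolution:
$$\widehat{fg}(m) \;=\; \sum_{k \in \Z^d} \hat{f}(m-k)\,\hat{g}(k),$$
so the proof reduces to a weighted $\ell^2$ estimate on this convolution against the weight $(1+|m|)^s$.

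The first step is to distribute the weight between the two factors using the Peetre-type inequality
$$(1+|m|)^s \;\le\; 2^s\bigl[(1+|m-k|)^s + (1+|k|)^s\bigr],$$
which follows from $1+|m| \le (1+|m-k|) + (1+|k|)$ together with the elementary bound $(a+b)^s \le 2^s(a^s+b^s)$. Inserting this inside the convolution gives
$$(1+|m|)^s\,|\widehat{fg}(m)| \;\le\; 2^s\bigl[(a_f * b_g)(m) + (b_f * a_g)(m)\bigr],$$
where $a_h(k):=(1+|k|)^s|\hat h(k)|$ and $b_h(k):=|\hat h(k)|$, so that $\|a_h\|_{\ell^2}=\|h\|_{\rH^s(\T^d)}$. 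The second step is Young's convolution inequality $\|h_1 * h_2\|_{\ell^2}\le\|h_1\|_{\ell^2}\|h_2\|_{\ell^1}$, which yields
$$\|fg\|_{\rH^s(\T^d)} \;\le\; 2^s\bigl(\|f\|_{\rH^s(\T^d)}\,\|\hat g\|_{\ell^1} + \|g\|_{\rH^s(\T^d)}\,\|\hat f\|_{\ell^1}\bigr).$$
Finally, by Cauchy--Schwarz,
$$\|\hat g\|_{\ell^1} \;=\; \sum_{k\in\Z^d}(1+|k|)^{-s}\cdot(1+|k|)^s|\hat g(k)| \;\le\; \Bigl(\sum_{k\in\Z^d}(1+|k|)^{-2s}\Bigr)^{1/2}\|g\|_{\rH^s(\T^d)},$$
and symmetrically for $\|\hat f\|_{\ell^1}$, which gives the claimed constant $C(d,s)$.

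There is no genuine obstacle here: the only place where the hypothesis $s > d/2$ intervenes is in the convergence of $\sum_{k\in\Z^d}(1+|k|)^{-2s}$, which is exactly the threshold for the embedding $\rH^s(\T^d)\hookrightarrow C(\T^d)$ and for $\hat g\in\ell^1$. This is the standard argument (see, e.g., \cite[Theorem 4.39]{Adams}), transcribed from $\R^d$ to the $\Z^d$-indexed Fourier-series setting on $\T^d$.
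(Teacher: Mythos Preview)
Your proof is correct. The paper does not actually supply a proof of this proposition; it simply cites \cite[Theorem 4.39]{Adams} and moves on. Your argument is the standard Fourier-side verification (Peetre splitting of the weight, Young's inequality, then Cauchy--Schwarz to control $\|\hat g\|_{\ell^1}$), and it is exactly the kind of proof one finds behind such a citation, adapted to the torus/lattice setting used here.
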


Let $v\in \rC(\T^d,\R)$, and $\alpha\in \R^d$ such that $\{1,\alpha_1,\ldots,\alpha_d\}$ are independent over $\Q$. Consider the following quasiperiodic Schr\"odinger cocycle $(\alpha,S_{E-v})$, where
$$
S_{E-v}(x)=\begin{pmatrix}
E-v(x)&-1\\
1&0
\end{pmatrix}.
$$
Adapting the definition from Section 1, we will say that $(\alpha,S_{E-v})$ is $\rH^s$-reducible if there exists $B\in \rH^s(\T^d;\GL(2,\C))$ such that
\beq
\label{eq_reducible_def_sec3}
B(x+\alpha)^{-1}S_{E-v}(x)B(x)=\begin{pmatrix}e^{2\pi i \rho(E)}&0\\0& e^{-2\pi i \rho(E)} \end{pmatrix},\quad \forall x\in \T^d,
\eeq
where $\rho(E)$ is the fibered rotation number of $(\alpha,S_{E-v})$, as defined in Section 1. As a consequence, $\deg B=0$.

\begin{defi}
Let $\KK\subset \R$ be a Borel subset. We will say that $(\alpha,S_{E-v})$ is {\it $\rH^s$-reducible on $\KK$} if there exists and an $\rL^2$-normalized balanced family of conjugating matrix functions $\{B(E)\}_{E\in \KK}$, satisfying \eqref{eq_reducible_def_sec3} for all $E\in \KK$, and the following bound:
\beq
\label{eq_sobolev_K}
\int_{\KK}\|B(E,\cdot)\|^4_{\rH^s(\T^d)}\,d\rho(E)<+\infty
\eeq
\end{defi}
At this moment, we also do not assume any regularity of $B_E$ in the variable $E$. For example, $B(E)$ itself may not be measurable in $E$, as long as there is an upper norm bound by a measurable function satisfying \eqref{eq_sobolev_K}. However, one can obtain the following:
\begin{enumerate}
	\item Assuming that $B(E)$ with above properties exist on $\KK$, one can pick a measurable parametrization of $B(E)$ in $E$.
	\item As in Section 1, let $\cR$ be the set of energies such that $(\alpha,S_{E-v})$ is $\rC^s$-reducible. Then, for a given $c_1>0$, the set of $E\in \cR$ such there exists $B(E)$ satisfying \eqref{eq_reducible_def_sec3} with, say, $\|B(E)\|_{\rC^s(\T^d;\mathrm{SL}(2,\R))}\le c_1$, is measurable.
\end{enumerate}
Claims (1) and (2) can be obtained from the following fact: selecting a $B$ satisfying \eqref{eq_reducible_def_sec3} is the same as selecting two linearly independent Bloch wave solutions of the Schr\"odinger equation which, in turn, are completely determined by their initial data. These solutions determine the values of $B(x)$ on a dense subset of $x$, and therefore contain all information on the regularity of the corresponding Bloch functions, as well as the matrix elements of $B$ (as long as the latter are continuous). One can also independently obtain measurability for almost every $E$ (which is equally good in our case) from the duality arguments below, similarly to \cite{jk2}.

Recall that the rotation number of the Schr\"odinger cocycle $(\alpha,S_{E-v})$ is a continuous non-increasing map $\rho\colon \R\to [0,1/2]$, which maps $\Sigma_{\alpha,v}$ onto $[0,1/2]$. The relation
$$
N(E)=1-2\rho(E)
$$
implies that the pre-image of the Lebesgue measure on $[0,1/2]$ under $\rho$ is half of the density of states measure. For $\theta\in [0,1/2]\setminus(\Z+\alpha\cdot\Z^d)$, denote by $E(\theta)$ the unique value $E\in\Sigma_{\alpha,v}$ such that $\rho(E)=\theta$ (note that the uniqueness is violated at the endpoints of spectral gaps, which correspond to the removed values of $\theta$). Extend it as an even function into $[-1/2,0]$, and then extend it into $\R$ by $1$-periodicity. Denote the resulting function, defined on $\R\setminus(\Z+\alpha\cdot\Z^d)$, by the same symbol $E(\theta)$. Let 
$$
\Theta=(\rho(\KK)\cup(-\rho(\KK)))\setminus (\Z/2+\alpha\cdot\Z^d/2).
$$
Then $E\colon \Theta\to \Sigma_{\alpha,v}$ is a measurable map which takes each of its values twice, and whose range is equal to $\KK$ except, at most, for a countable subset. Note that we only needed to remove $\Z+\alpha\cdot \Z^d$ for the above argument. However, the further construction will require removal of half-$\alpha$-rational frequencies.

Let us recall the definition of the dual operator family.
\beq
\label{eq_dual_sec4}
(L_{\theta}\psi) (n)=\sum_{m\in \Z^d}\hat{v}_{n-m}\psi(m)+2\cos 2\pi (n\cdot\alpha+\theta)\psi(n), \quad \theta\in \T.
\eeq
In order to formulate the main result of this section, introduce the translation operator:
$$
T^a\colon \ell^2(\Z^d)\to \ell^2(\Z^d),\quad (T^a\psi)(n):=\psi(n+a).
$$
An important property of the eigenvectors of the operators \eqref{eq_dual_sec4} is the following covariance relation. Suppose that $L_{\theta}\psi=E\psi$, $\psi\in \ell^2(\Z^d)$  Then
\beq
\label{eq_covariance}
L_{\theta+\ell\cdot\alpha}T^\ell\psi(\theta)=E(\theta)T^\ell\psi(\theta),\quad \forall \ell\in \Z^d.
\eeq
As a consequence, if one wants to study localization properties of the family $\{L_{\theta}\}_{\theta\in\T}$, it may be beneficial to pick only one representative from each ``equivalence class'' defined by \eqref{eq_covariance}. There are obvious difficulties with this approach, as it dangerously resembles the procedure of constructing a non-measurable subset of the circle. However, in our setting it is possible and is discussed, for example, in \cite{jk2}. The main result of this section is the following refinement of \cite{jk2}.
\begin{theorem}
\label{main_localization}
Suppose that the family of Schr\"odinger cocycles $(\alpha,S_{E-v})$ is $\rH^s$-reducible on $\KK\subset \R$ with sone $s>d/2$. Construct the subset $\Theta\subset[0,1]$ and the function $E(\cdot)$ as above. Then there exists a measurable function $f\colon \Theta\times \T^d\to\C$, such that the following claims hold.
\begin{enumerate}
\item For each $\theta\in \Theta$, $\|f(\theta,\cdot)\|_{\rL^2(\T^d)}=1$, and
$$
\int_{\Theta}\|f(\theta,\cdot)\|_{\rH^s(\T^d)}^4\,d\theta<+\infty.
$$
\item For each $\theta\in \Theta$, the vector $\psi(\theta;m)=\hat{f}(\theta,m)$ $($the Fourier transform in the second variable$)$ is an eigenvector of the dual operator:
$$
L_{\theta}\psi(\theta)=E(\theta)\psi(\theta).
$$
\item For $\ell\in \Z^d$ such that $\theta-\ell\cdot\alpha\in \Theta$, construct additional eigenvectors of $L_{\theta}$ by
$$
\psi_{\ell}(\theta)=T^\ell\psi(\theta-\ell\cdot\alpha),
$$
so that, using \eqref{eq_covariance}, we have
$$
L_{\theta+\ell\cdot\alpha}\psi_\ell(\theta)=E(\theta)\psi_\ell(\theta).
$$
Then, for almost every $\theta\in \T$, the spectrum of $L_{\theta}(\KK)$ is purely point, and the constructed eigenfunctions
\beq
\label{eq_eigenfunctions}
\{\psi_\ell(\theta)\colon \theta-\ell\cdot\alpha\in \Theta\},
\eeq
form a complete system for $L_{\theta}(\KK)$.
\item Denote by $\psi_{*}(\theta)$ the following convolution vector:
$$
\psi_{*}(\theta;p)=\sum_{m\in \Z^d}|\psi(\theta;m)\psi(\theta;m+p)|.
$$
Then the following Sobolev localization property holds:
\beq
\label{eq_sobolev_localization}
\sum_p(1+|p|)^{2s}\int_{\Theta}|\psi_{\star}(\theta;p)|^2\,d\theta<+\infty.
\eeq
\end{enumerate}
\end{theorem}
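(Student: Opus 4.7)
The plan is to construct $f$ explicitly from the columns of the reducibility matrix $B(E;\cdot)$ and then verify the four claims in order. From the identity $S_{E-v}(x)B(x) = B(x+\alpha)\,\mathrm{diag}(e^{2\pi i\rho(E)}, e^{-2\pi i\rho(E)})$, reading off both rows and eliminating $b_{21}$ via the relation $b_{21}(x) = e^{-2\pi i\rho(E)}b_{11}(x-\alpha)$ yields the scalar ``magnetic'' equation
\beq
e^{2\pi i\rho(E)} b_{11}(x+\alpha) + e^{-2\pi i\rho(E)} b_{11}(x-\alpha) + v(x) b_{11}(x) = E b_{11}(x),
\eeq
and the analogous computation for the second column gives the same equation with $\rho(E)$ replaced by $-\rho(E)$ for $b_{12}$. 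Taking Fourier coefficients in $x$ converts the first equation precisely into $L_{\rho(E)}\psi = E\psi$ with $\psi(m) = \hat b_{11}(m)$, and the second into $L_{-\rho(E)}\psi = E\psi$. Accordingly, I would define $f(\theta, \cdot) := 2 b_{11}(E(\theta);\cdot)$ for $\theta \in \rho(\KK)\cap \Theta$ and $f(\theta, \cdot) := 2 b_{12}(E(\theta);\cdot)$ for $\theta \in (-\rho(\KK))\cap\Theta$. The balanced $\rL^2$-normalization $\|b_{ij}\|_{\rL^2}^2 = 1/4$ gives $\|f(\theta,\cdot)\|_{\rL^2} = 1$, and the Fourier identification above is exactly claim (2). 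The $\rH^s$-integrability in (1) transfers from \eqref{eq_sobolev_K} since $\|b_{ij}\|_{\rH^s}\le \|B\|_{\rH^s}$ and the pushforward of Lebesgue measure on $\Theta$ under $\theta\mapsto E(\theta)$ agrees, up to a factor of $2$, with $d\rho$ restricted to $\KK$; measurability of $\theta \mapsto f(\theta,\cdot)$ was already noted in the discussion preceding the theorem.

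For claim (3), I would follow the Aubry-duality argument of \cite{jk2}, localized to the subset $\KK$. The covariance relation \eqref{eq_covariance} immediately yields that each $\psi_\ell(\theta)$ is an eigenvector of $L_\theta$ with eigenvalue $E(\theta)$. To establish completeness of $\{\psi_\ell(\theta)\}$ in $\Ran\one_\KK(L_\theta)$ for almost every $\theta\in\T$, I would compare the direct-integral decomposition of $\int_{\T^d}^{\oplus} H_x\,dx$ restricted to $\KK$ (which carries purely absolutely continuous spectrum by subordinacy theory, since the assumption implies $\KK\subset\cRR^s$) with that of $\int_\T^{\oplus} L_\theta\,d\theta$ restricted to the same energy set, which is unitarily equivalent to the first through Aubry duality. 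The identity $N(E) = 1 - 2\rho(E)$ together with the change of variables $\theta = \rho(E)$ shows that the spectral mass of the dual family on $\KK$, decomposed over $\theta$, is exhausted by the translated orbits $\{\psi_\ell(\theta)\colon \theta-\ell\cdot\alpha\in\Theta\}$ as $\ell$ runs over $\Z^d$; no additional spectral content remains, proving completeness.

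Finally, for claim (4), I would apply the Sobolev algebra bound. Define $h(\theta, x) := \sum_{m\in\Z^d}|\psi(\theta;m)|\,e^{2\pi i m\cdot x}$. Then $\|h(\theta,\cdot)\|_{\rH^s} = \|f(\theta,\cdot)\|_{\rH^s}$ because the Sobolev norm depends only on moduli of Fourier coefficients, and a direct computation identifies the Fourier coefficients of $|h(\theta,\cdot)|^2$ as precisely $\psi_*(\theta;p)$. Proposition \ref{prop_multiplicative_sobolev} then gives
\beq
\sum_p (1+|p|)^{2s}|\psi_*(\theta;p)|^2 = \bigl\| |h(\theta,\cdot)|^2 \bigr\|_{\rH^s(\T^d)}^2 \le C(d,s)\,\|f(\theta,\cdot)\|_{\rH^s(\T^d)}^4,
\eeq
and integrating over $\theta\in\Theta$ using claim (1) produces \eqref{eq_sobolev_localization}. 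I expect the principal obstacle to be step (3): the completeness argument in \cite{jk2} is formulated globally over the whole spectrum, and one must carefully verify that restricting to a Borel subset $\KK$ and its preimage $\Theta$ preserves spectral mass exactly under the duality, with no leakage into the singular part of the spectrum outside $\KK$.
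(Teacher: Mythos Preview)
Your proposal is correct and follows essentially the same route as the paper: the explicit construction of $f$ from the first column of $B$, the eigenvector identification via the ``magnetic'' scalar equation, completeness via the density-of-states comparison from \cite{jk2} restricted to $\KK$, and the Sobolev algebra bound for claim (4). Two cosmetic differences: the paper extends $f$ to negative $\theta$ by complex conjugation rather than using $b_{12}$ (these agree up to a unimodular factor since $S_{E-v}$ is real), and for (4) the paper writes $\psi_*$ as the convolution $\hat f_1 * \hat f_2$ with $f_1,f_2$ built from $|\psi(\theta;\mp m)|$, which is exactly your $|h|^2$ in disguise. One small slip: by covariance, $\psi_\ell(\theta)$ is an eigenvector of $L_\theta$ with eigenvalue $E(\theta-\ell\cdot\alpha)$, not $E(\theta)$.
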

\begin{proof}
Most of the the argument is very close to \cite{jk2}. See also a similar argument in \cite[Section 3]{gy}. Let $E(\theta)$ be constructed as above. The arguments of \cite{jk2} imply that one can take 
$$
f(x,\theta)=\frac{B_{11}(x,E(\theta))}{\|B_{11}(x,E(\theta))\|_{\rL^2(\T^d)}}
$$ 
for $\theta\in \Theta\cap[0,1/2]$ and extend it by the relation $f(x,-\theta)=\overline{f(x,\theta)}$ into $\Theta$.  Then, for each $\theta\in \Theta$, $\psi(\theta;n)=\hat{f}(\theta;n)$ would be an $\ell^2(\Z^d)$-normalized eigenfunction of $L_{\theta}$:
\beq
\label{eq_eigenvalue_eq}
L_{\theta}\psi(\theta)=E(\theta)\psi(\theta),
\eeq
which implies the first two claims. Let us establish completeness. Again, the argument is similar to \cite{jk2}: we calculate the ``partial density of states measure'', using the expression \eqref{eq_ids_average} with the spectral projection of $L_{\theta}$ replaced by the projection onto the subspace spanned by eigenfunctions \eqref{eq_eigenfunctions}. If that measure coincides with the complete IDS, this would indicate completeness of the eigenfunctions (for almost every $\theta$). The calculation is straightforward if we assume $\psi(\theta)$ to be measurable. One can recover measurability from that of $B(E)$, but there is also a more direct argument as follows.

Let $P_\ell(\theta)$ be the spectral projection of $L_{\theta}$ onto the eigenspace corresponding to the eigenvalue $E(\theta-\ell\cdot\alpha)$, for $\theta-\ell\cdot\alpha\in \Theta$. The above construction implies $P_\ell(\theta)\neq 0$ for $\theta\in \Theta+\ell\cdot\alpha$. Let $\phi(\theta)$ be a measurable branch of an element from $P_0(\theta)$, $\|\phi(\theta)\|=1$. Eventually, we will show that $\phi(\theta)=c(\theta)\psi(\theta)$ for almost every $\theta$. However, at this point we cannot state that the spectrum of $L_{\theta}$ is simple. Fortunately, for the following calculations $\phi(\theta)$ is just as good as $\psi(\theta)$. Denote
$$
\phi_\ell(\theta):=T^\ell \phi(\theta-\ell\cdot\alpha).
$$
Then, by covariance, we have the following eigenvalue equation similar to \eqref{eq_eigenvalue_eq}.
$$
L_{\theta}\phi_\ell(\theta)=E(\theta-\ell\alpha)\phi_\ell(\theta),\,\,\,\textrm{ if }\,\, \theta-\ell\cdot\alpha\in \Theta.
$$
As a consequence, we have  $\phi_\ell(\theta)\in \Ran P_\ell(\theta)$, and
\begin{multline*}
\sum_{\ell}\int_{\T}\<P_\ell(\theta)\delta_0,\delta_0\>\one_{\Theta}(\theta-\ell\cdot\alpha)\,d\theta\ge 
\sum_{\ell}\int_{\T}|\<\phi_\ell(\theta),\delta_0\>|^2\one_{\Theta}(\theta-\ell\cdot\alpha)\,d\theta\\
=\sum_\ell\int_{\T}|\<\phi(\theta-\ell\cdot\alpha),\delta_{-\ell}\>|^2\one_{\Theta_0}(\theta-\ell\cdot\alpha)\,d\theta
=\sum_\ell\int_{\T}|\<\phi(\theta),\delta_{-\ell}\>|^2\one_{\Theta}(\theta)\,d\theta=|\Theta|=N(\KK).
\end{multline*}
Since the left hand side cannot be larger than $|N(\KK)|$, all inequalities are actually equalities, which also implies simplicity of the spectrum for almost every $\theta$. Since measurable parametrization of eigenvectors was obtained independently of measurability of $B(E)$ and that the eigenvalues of $L_{\theta}$ are simple on $\Theta$, this gives us measurability of $B(E)$ in retrospective.

It remains to establish Claim (4). We will obtain it as a consequence of Claim 1. Let 
$$
f_1(\theta,x)=\sum_{m\in \Z^d}|\psi(\theta;-m)|e^{2\pi i m\cdot x},\quad f_2(\theta,x)=\sum_{m\in \Z^d}|\psi(\theta;m)|e^{2\pi i m\cdot x}.
$$
Clearly, we have
$$
\|f_1(\theta,\cdot)\|_{\rH^s(\T^d)}=\|f_2(\theta,\cdot)\|_{\rH^s(\T^d)}=\|f(\theta,\cdot)\|_{\rH^s(\T^d)}.
$$
Then one can express $\psi_{*}$ as a convolution:
$$
\psi_*(\theta;p)=(\hat f_1(\theta,\cdot)*\hat f_2(\theta,\cdot))(p),
$$
and hence, by definition of the Sobolev norm and Proposition \ref{prop_multiplicative_sobolev}, we have
$$
\sum_{p}(1+|p|)^{2s}|\psi_*(\theta;p)|^2=\|f_1(\theta,\cdot) f_2(\theta,\cdot)\|_{\rH^s}^2\le \|f(\theta,\cdot)\|_{\rH^s}^4.
$$
One can now get Claim (4) by integrating in $\theta$.
\end{proof}
We will also need a Sobolev version of the dynamical localization.
\begin{theorem}
\label{th_sobolev}
Under the assumptions of Theorem $\ref{main_localization}$, there exists $h\in \rH^s(\T^d)$ such that
$$
\int_{\T}|\<\delta_p,\one_{\KK}(L_{\theta})e^{-i t L_{\theta}}\delta_q\>|\,d\theta=\hat h(q-p).
$$
\begin{proof}
We have, using the notation of the previous theorem,
\begin{multline}
\label{eq_covariance_calculation}
h_{pq}:=\int_{\T}|\<\delta_p,\one_{\KK}(L_{\theta})e^{-i t L_{\theta}}\delta_q\>|\,d\theta\le \sum_{\ell\in \Z^d}\int_{\Theta+\ell\cdot\alpha}|\psi_{\ell}(\theta,p)\psi_{\ell}(\theta,q)|\,d\theta\\
=\sum_{\ell\in \Z^d}\int_{\Theta+\ell\cdot\alpha}|\psi(\theta-\ell\cdot\alpha,p+\ell)\psi(\theta-\ell\cdot\alpha,q+\ell)|\,d\theta=\sum_{\ell\in \Z^d}\int_{\Theta}|\psi(\theta,p+\ell)\psi(\theta,q+\ell)|\,d\theta\\
\le \l(\int_{\Theta}|\psi_*(\theta,q-p)|^2\,d\theta\r)^{1/2}.
\end{multline}
Due to covariance, $h_{pq}$ only depends on $q-p$. We have the following thanks to the last claim in Theorem \ref{main_localization}:
$$
\sum_{p\in \Z^d}(1+|p|)^{2s}|h_{0p}|^2\le \int_{\Theta}\sum_{p\in \Z^d}(1+|p|)^{2s}|\psi_*(\theta;p)|^2\,d\theta\le \int_{\Theta}\|f(\theta;\cdot)\|^4_{\rH^s}.\,\qedhere
$$
\end{proof}
\end{theorem}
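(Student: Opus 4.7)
The plan is to expand the matrix element in the complete eigenbasis of $L_{\theta}$ supplied by Theorem \ref{main_localization}, use the covariance relation to re-index the sum so that only the difference $q-p$ survives inside the absolute value, and then invoke the convolution bound from Claim (4) of that theorem. The key structural observation is that taking absolute values inside the integrand kills the oscillatory factors $e^{-itE(\cdot)}$, so the resulting estimate is uniform in $t$ and the entire argument reduces to Sobolev regularity of the Bloch-wave eigenfunctions.

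Concretely, on the full-measure set of $\theta$ at which $L_{\theta}(\KK)$ has pure point spectrum, Claim (3) of Theorem \ref{main_localization} yields the orthonormal eigenbasis $\{\psi_{\ell}(\theta)\}_{\theta-\ell\cdot\alpha\in\Theta}$, so
\begin{equation*}
\<\delta_{p},\one_{\KK}(L_{\theta})e^{-itL_{\theta}}\delta_{q}\>=\sum_{\ell\,:\,\theta-\ell\cdot\alpha\in\Theta}e^{-itE(\theta-\ell\cdot\alpha)}\,\overline{\psi_{\ell}(\theta,p)}\,\psi_{\ell}(\theta,q),
\end{equation*}
and the triangle inequality bounds the absolute value by $\sum_{\ell}|\psi_{\ell}(\theta,p)\psi_{\ell}(\theta,q)|$. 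Unfolding $\psi_{\ell}(\theta,\cdot)=\psi(\theta-\ell\cdot\alpha,\cdot+\ell)$, integrating over $\T$, shifting $\theta\mapsto\theta+\ell\cdot\alpha$, and relabelling $m=p+\ell$ converts the sum into $\int_{\Theta}\psi_{*}(\theta,q-p)\,d\theta$. This depends only on $r:=q-p$, so the covariance part of the claim is established and the Fourier coefficients $\hat h(r)$ may be defined to equal the original integral (pointwise bounded above by $\int_{\Theta}\psi_{*}(\theta,r)\,d\theta$).

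For the Sobolev summability I would apply Cauchy--Schwarz in $\theta$, using $|\Theta|\le 1$, to obtain $|\hat h(r)|^2\le \int_{\Theta}|\psi_{*}(\theta,r)|^2\,d\theta$, and then sum $(1+|r|)^{2s}$ against $r\in\Z^d$:
\begin{equation*}
\sum_{r\in\Z^d}(1+|r|)^{2s}|\hat h(r)|^2\le \int_{\Theta}\sum_{r\in\Z^d}(1+|r|)^{2s}|\psi_{*}(\theta,r)|^2\,d\theta,
\end{equation*}
which is finite by Claim (4) of Theorem \ref{main_localization}. Defining $h$ via its Fourier series then places it in $\rH^s(\T^d)$. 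The only subtle point is licensing the termwise absolute-value expansion pointwise in $\theta$; this is immediate from the completeness and simplicity of the eigenbasis in Claim (3). No further quantitative input, and in particular no Diophantine condition on $\alpha$, is needed for this step --- everything funnels through the single Sobolev estimate \eqref{eq_sobolev_localization}, which was the reason for passing to the $\rH^s$-reducibility hypothesis in the first place.
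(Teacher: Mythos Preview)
Your argument is essentially identical to the paper's: expand in the eigenbasis from Theorem \ref{main_localization}, take absolute values to kill the phases, use covariance to shift and relabel so that the sum over $\ell$ collapses to the convolution $\psi_*(\theta,q-p)$, apply Cauchy--Schwarz in $\theta$ with $|\Theta|\le 1$, and then sum against the Sobolev weight to invoke Claim~(4).

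One small logical slip: the fact that the \emph{upper bound} $\int_{\Theta}\psi_*(\theta,q-p)\,d\theta$ depends only on $q-p$ does not by itself show that the original integral $h_{pq}$ does, so it does not justify ``the Fourier coefficients $\hat h(r)$ may be defined to equal the original integral.'' For that you need the direct covariance relation $T^{p}L_{\theta}T^{-p}=L_{\theta+p\cdot\alpha}$ (hence $\<\delta_p,\one_{\KK}(L_{\theta})e^{-itL_{\theta}}\delta_q\>=\<\delta_0,\one_{\KK}(L_{\theta+p\cdot\alpha})e^{-itL_{\theta+p\cdot\alpha}}\delta_{q-p}\>$) together with translation invariance of $d\theta$, which is exactly what the paper means by ``Due to covariance, $h_{pq}$ only depends on $q-p$.'' This is a one-line fix and does not affect the rest of your argument.
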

Suppose that, instead of a Sobolev bound, we have a uniform bound
$$
\sup_{\theta\in \Theta}\|f(\theta;\cdot)\|_{\rC^s(\T^d)}<+\infty.
$$
In this case, the dual operator family demonstrates an extremely strong form of uniform localization, which would allow us, ultimately, to relax regularity requirements on the reducibility.
\begin{lemma}
\label{lemma_uniform_localization}
Suppose that, in the notation of Theorem $\ref{main_localization}$, we have
$$
\sup_{\theta\in \Theta}\|f(\theta;\cdot)\|_{\rC^s(\T^d)}=:M<+\infty.
$$
Then, for almost every $\theta\in \T$, we have the following uniform dynamical localization bound:
$$
\esssup_{\theta\in \T}|\<\delta_p,\one_{\KK}e^{itL_{\theta}}\delta_q\>|<\frac{C(s,M)}{(1+|p-q|)^{2s-d}}.
$$
\begin{proof}
Using the representation from Theorem \ref{main_localization}, we have
$$
|\psi_{\ell}(\theta;q)|=|\hat f(\theta-\ell\cdot\alpha;q+\ell)|\le \frac{C(M)}{(1+|q+\ell|)^s}.
$$
The rest follows from Lemma \ref{lemma_convolution}.
\end{proof}
\end{lemma}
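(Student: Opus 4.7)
The plan is to combine the covariance representation of the eigenfunctions of $L_\theta$ from Theorem \ref{main_localization} with the elementary fact that $\rC^s$-regularity of $f(\theta,\cdot)$ produces uniform polynomial pointwise decay of its Fourier coefficients. Since $\|f(\theta,\cdot)\|_{\rC^s(\T^d)}\le M$ for every $\theta\in\Theta$, the standard bound on Fourier coefficients of $\rC^s$-functions on $\T^d$ yields
$$
|\psi(\theta;m)|=|\hat f(\theta;m)|\le \frac{C(s,M)}{(1+|m|)^s}
$$
uniformly in $\theta\in \Theta$ and $m\in\Z^d$. Using the covariance formula $\psi_\ell(\theta)=T^\ell\psi(\theta-\ell\cdot\alpha)$ from Theorem \ref{main_localization}(3), this immediately gives
$$
|\psi_\ell(\theta;q)|=|\hat f(\theta-\ell\cdot\alpha;q+\ell)|\le \frac{C(s,M)}{(1+|q+\ell|)^s}
$$
whenever $\theta-\ell\cdot\alpha\in\Theta$, which is the bound flagged in the authors' short proof sketch.

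Next, I would invoke Claim (3) of Theorem \ref{main_localization}: for almost every $\theta\in\T$, the family $\{\psi_\ell(\theta):\theta-\ell\cdot\alpha\in\Theta\}$ is a complete orthonormal system in $\Ran\one_{\KK}(L_\theta)$, with $L_\theta\psi_\ell(\theta)=E(\theta-\ell\cdot\alpha)\psi_\ell(\theta)$. This yields the spectral expansion
$$
\<\delta_p,\one_{\KK}(L_\theta)e^{itL_\theta}\delta_q\>=\sum_{\ell:\,\theta-\ell\cdot\alpha\in\Theta}e^{itE(\theta-\ell\cdot\alpha)}\,\overline{\psi_\ell(\theta;p)}\,\psi_\ell(\theta;q).
$$
Taking absolute values inside the sum and applying the pointwise bounds from the previous paragraph reduces the problem to a deterministic estimate:
$$
|\<\delta_p,\one_{\KK}(L_\theta)e^{itL_\theta}\delta_q\>|\le C(s,M)^2\sum_{\ell\in\Z^d}\frac{1}{(1+|p+\ell|)^s(1+|q+\ell|)^s},
$$
which is in particular independent of $\theta$ and of $t$, so the essential supremum over $\theta$ will follow immediately once the sum is controlled.

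The final step is a routine convolution estimate, which is exactly what the authors have packaged as Lemma \ref{lemma_convolution}. After the substitution $\ell\mapsto\ell-p$, the right-hand side becomes the translation-invariant convolution $\sum_\ell(1+|\ell|)^{-s}(1+|\ell-(p-q)|)^{-s}$ of two polynomially decaying sequences on $\Z^d$ with displacement $p-q$. For $2s>d$ this sum converges, and the standard convolution estimate produces the decay rate $(1+|p-q|)^{-(2s-d)}$, yielding the stated bound. I do not expect any serious obstacle: the difficult content has been absorbed into Theorem \ref{main_localization}, and the remaining steps are bookkeeping. The only points that warrant minor vigilance are (i) that the spectral expansion requires simplicity and completeness of the eigensystem of $L_\theta(\KK)$ for almost every $\theta$, which is precisely what Theorem \ref{main_localization}(3) guarantees, and (ii) that the exponents $s$ in the Fourier decay match the exponents in the convolution lemma, which forces the $2s-d$ rate appearing in the statement.
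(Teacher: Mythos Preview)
Your proposal is correct and follows exactly the approach of the paper: bound the Fourier coefficients of $f(\theta,\cdot)$ via the uniform $\rC^s$ assumption to get $|\psi_\ell(\theta;q)|\le C(s,M)(1+|q+\ell|)^{-s}$, expand the propagator in the eigenbasis provided by Theorem~\ref{main_localization}(3), and then apply Lemma~\ref{lemma_convolution} to the resulting convolution sum. The paper's proof is simply a terse two-line version of what you wrote.
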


\section{From localization to strong ballistic transport}
In this section, we will prove Theorem \ref{main2} by studying the consequences of the results from the previous section to the operator \eqref{h_def}:
$$
(H_{x}\psi)(n)=\psi(n-1)+\psi(n+1)+v(x+n\alpha)\psi(n),\ \ n\in\Z.
$$
In order to formulate the main result, we will need to introduce the dual operator family. Define the Fourier coefficients of $v$ by $\hat{v}_n$, where
$$
v(x)=\sum_{n\in \Z^d} \hat{v}_n e^{2\pi i n\cdot x}.
$$
Let $\{L_{\theta}\}_{\theta\in \T}$ be the dual family on $\ell^2(\Z^d)$:
\beq
\label{eq_L_def}
(L_{\theta}\psi) (n)=\sum_{m\in \Z^d}\hat{v}_{n-m}\psi(m)+2\cos 2\pi (n\cdot\alpha+\theta)\psi(n).
\eeq
As stated in the introduction, denote by $A$ the current operator on $\ell^2(\Z)$:
$$
(A\psi)(n)=i(\psi(n+1)-\psi(n-1)).
$$
For a Borel subset $\KK\subset\R$, we defined
$$
A(x,\KK)=\one_{\KK}(H_x)A\one_{\KK}(H_x).
$$
Recall also that, by definition,
$$
H_x(\KK)=\l.\one_{\KK}(H_x)\r|_{\Ran \one_{\KK}(H_x)}.
$$
It is convenient to assume that $A(x,\KK)$ acts on the whole $\ell^2(\Z)$ and $H_x(\KK)$ is restricted to 
$\Ran \one_{\KK}(H_x)$, since, in the latter case, the wording ``$\sigma(H_x(\KK))$ is purely absolutely continuous'' has intended meaning and does not need to account for the large kernel of $\Ran \one_{\KK}(H_x)^{\perp}$. Recall the definition of the function $g_{\KK}(E):$
$$
g_{\KK}(E)=\begin{cases}\frac{1}{\pi N'(E)},&\quad E\in \KK\\
0,&E\in \R\backslash\KK.	
\end{cases}
$$
\begin{defi}
An analytic quasiperiodic operator family $\{H_x\}_{x\in \T^d}$ will called $\KK$-{\it regular} if the following properties are satisfied:
\begin{enumerate}
	\item The spectra of $H_x(\KK)$ are purely absolutely continuous.
	\item The families $\one_{\KK}(H_x)$ and $g_{\KK}(H_x)$ are strongly continuous in the parameter $x\in \T^d$.
\end{enumerate}	
\end{defi}
The results of \cite{Deift} imply that, under the above assumptions, $\|g_{\KK}(H_x)\|\le 2$.

\begin{theorem}
\label{th_main_abstract}
Let $\{H_x(\KK)\}_{x\in \T^d}$ be a $\KK$-regular family such that the dual operator family $\{L_{\theta}\}_{\theta\in \T}$ satisfies $\rH^s$-localization in expectation on $\KK$ for some $s>5d/2$, or $s$-uniform power law localization in expectation on $\KK$ for some $s>d$. Then the conclusion of Theorem $\ref{main}$ holds. In other words, for every $x\in \T^d$ the limit
$$
Q(x,\KK)=\slim\limits_{T\to +\infty}\frac{1}{T}\int_0^T e^{itH_x}\one_{\KK}(H_x)A\one_{\KK}(H_x)e^{-itH_x}\,dt,
$$
exists and
$$
\ker{Q(x,\KK)}=\l(\Ran\one_{\KK}(H_x)\r)^{\perp}.
$$
\end{theorem}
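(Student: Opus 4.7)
The plan is to use the dual eigenfunction representation from Theorem \ref{main_localization} to diagonalize the restriction $H_x(\KK)$, compute the Cesàro average in this diagonalized picture, and then use the $\ell^1$-type tail estimates from Theorem \ref{th_sobolev} (or Lemma \ref{lemma_uniform_localization}) to upgrade the resulting convergence from almost-everywhere in $x$ to pointwise-in-$x$ and strong on $\ell^2(\Z)$.

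First I would set up the diagonalization. Each dual eigenvector $\psi(\theta,\cdot)$ constructed in Theorem \ref{main_localization} corresponds, via Aubry duality (the inverse Fourier transform in the second variable together with the Bloch ansatz), to a family of generalized eigenfunctions of $H_x$ of the form $\phi_\theta(x,n)=e^{2\pi i n\theta} f(\theta,x+n\alpha)$, where the normalization in part (1) of that theorem makes them $\rL^2$-normalized over $\Theta$. The covariance relation \eqref{eq_covariance} and the completeness claim in part (3) (completeness of $\{\psi_\ell(\theta)\}$ for $L_\theta(\KK)$) translate into a unitary $U_x\colon \Ran\one_\KK(H_x)\to \rL^2(\Theta,d\theta)$ under which $H_x(\KK)$ becomes multiplication by $E(\theta)$. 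The relation $N(E)=1-2\rho(E)$ then identifies $d\theta = \tfrac12 dN(E)$, i.e.\ $d\theta = \tfrac1{2}N'(E)\,dE$ on $\KK\cap\Sigma_{\alpha,v}$.

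Second, I would compute the action of $A$ in this diagonalization. A direct computation using the eigenvalue equation for $\phi_\theta$ yields that $U_x A U_x^{*}$, acting on nice elements of $\rL^2(\Theta)$, equals multiplication by $E'(\theta)$ plus an off-diagonal-in-$\theta$ remainder whose kernel is controlled by the convolution vector $\psi_*(\theta;\cdot)$ of Theorem \ref{main_localization}(4). Conjugating the Cesàro average by $U_x$ and using that, on the diagonal, $T^{-1}\int_0^T e^{it(E(\theta)-E(\theta'))}\,dt\to \one_{\theta=\theta'}$ in a suitable bilinear sense, the off-diagonal remainder is killed in the limit, leaving multiplication by $E'(\theta)$. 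The change of variables $d\theta=\tfrac12 N'(E)\,dE$ converts this on the $H_x$ side into multiplication by $2/(2\pi N'(E))$ on $\KK\cap\Sigma_{\alpha,v}$, which is exactly $g_\KK(H_x)$ up to the constants fixed in \eqref{eq_g_def}.

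The main obstacle, and the place where the strong-localization hypothesis is spent, is passing from this formal identification to strong convergence uniformly in the initial vector and pointwise in $x$. The duality calculation \eqref{eq_covariance_calculation} bounds the matrix elements of the remainder by a convolution of the $\psi_*(\theta,\cdot)$, which by Theorem \ref{th_sobolev} has summable-in-$p$ decay; in the uniform case, Lemma \ref{lemma_uniform_localization} gives pointwise polynomial decay in $|p-q|$. With $s>5d/2$ (resp.\ $s>d$), these estimates close a Schur-type argument that bounds the operator norm of the remainder in $\ell^2(\Z)$ by a quantity tending to $0$ as $T\to\infty$, uniformly in $x$ in the second case and after averaging in $x$ in the first. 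The extra factor of $d/2$ beyond $s>2d$ (resp.\ $s>d/2$) compensates for the loss in going from $\rL^2(\T^d)$-integrated bounds to a pointwise-in-$x$ strong-operator statement via a Sobolev embedding.

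Finally, I would use $\KK$-regularity to promote the convergence to \emph{every} $x\in\T^d$. The pre-limit operator $T^{-1}\int_0^T e^{itH_x}\one_\KK(H_x)A\one_\KK(H_x)e^{-itH_x}\,dt$ is strongly continuous in $x$ (by strong continuity of $\one_\KK(H_x)$ assumed in $\KK$-regularity and a standard equicontinuity argument in $T$), and the candidate limit $g_\KK(H_x)$ is strongly continuous in $x$ by the same assumption; since strong convergence already holds on a full-measure set of $x$, it propagates to every $x$. The kernel identification $\ker Q(x,\KK)=(\Ran\one_\KK(H_x))^{\perp}$ then follows because $g_\KK(E)=1/(\pi N'(E))>0$ for Lebesgue-a.e.\ $E\in \KK\cap\Sigma_{\alpha,v}$, and the spectrum of $H_x(\KK)$ is purely absolutely continuous by the $\KK$-regularity hypothesis.
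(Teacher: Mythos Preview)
Your outline has the right ingredients (duality, the eigenfunction representation of Theorem~\ref{main_localization}, the tail bounds of Theorem~\ref{th_sobolev}/Lemma~\ref{lemma_uniform_localization}, and $\KK$-regularity for continuity in $x$), but the way you assemble them contains two genuine gaps.

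\textbf{First gap: Ces\`aro averaging on the wrong side of the duality.} You propose to diagonalize $H_x(\KK)$ by the Bloch waves $\phi_\theta(x,n)=e^{2\pi i n\theta}f(\theta,x+n\alpha)$, so that $H_x(\KK)$ becomes multiplication by $E(\theta)$ on $\rL^2(\Theta)$, and then argue that the time average of $U_xAU_x^{*}$ converges to its ``diagonal part''. But $H_x(\KK)$ has purely absolutely continuous spectrum, and for an operator $K$ on $\rL^2(\Theta)$ the pointwise fact $T^{-1}\int_0^T e^{it(E(\theta)-E(\theta'))}\,dt\to 0$ for $\theta\neq\theta'$ does \emph{not} by itself yield strong convergence of $T^{-1}\int_0^T e^{itE}K e^{-itE}\,dt$ to anything, let alone to a multiplication operator; extracting a ``diagonal'' from a general bounded operator on $\rL^2$ is not a well-posed operation. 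Your identification of the diagonal with $E'(\theta)$ is also unjustified: it would require differentiability of the Bloch data in $\theta$, which is precisely what one does not control here. The paper avoids this entirely by performing the Ces\`aro averaging on the \emph{dual} side, where $L_\theta(\KK)$ has pure point spectrum; there the averaging is the elementary matrix computation of Proposition~\ref{prop_qtilde}, and the limiting diagonal entries $\langle\widetilde A(\theta)\psi_k,\psi_k\rangle$ are identified with $1/(\pi N'(E_k(\theta)))$ via the Wronskian/Kotani argument \eqref{eq_trivial_kernel}--\eqref{eq_dk_equality}, not via any $E'(\theta)$.

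\textbf{Second gap: from integrated bounds to pointwise-in-$x$ strong convergence.} Your sketch says the Sobolev localization gives convergence ``after averaging in $x$'', and then you invoke continuity of the pre-limit and the candidate limit to propagate from a full-measure set of $x$ to every $x$. But an $x$-averaged bound does not give convergence for a.e.\ $x$, only along a subsequence of time scales, so there is nothing to propagate. The paper's mechanism is different and specific: Lemma~\ref{lemma_L1} bounds $\sup_x\|Q(x,T,\KK)\delta_p-g_\KK(H_x)\delta_p\|_{\ell^2(\Z)}$ directly by the $\ell^1(\Z^d;\rL^2(\T))$-norm of the \emph{dual} difference $\widetilde{\mathcal Q}(T,\KK)\delta_0-\widetilde{\mathcal Q}(\KK)\delta_0$, via the elementary Fourier inequality $\|g\|_{L^\infty_x}\le\|\hat g\|_{\ell^1}$ applied fiberwise in $\theta$. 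The Sobolev localization (Theorem~\ref{th_sobolev}) is then spent on showing that these $\ell^1(\Z^d;\rL^2(\T))$-tails are small \emph{uniformly in $T$}; combined with the already-established strong convergence in $\rL^2(\T\times\Z^d)$ and an $\varepsilon/2$ argument, this yields convergence in $\ell^1(\Z^d;\rL^2(\T))$ and hence the uniform-in-$x$ bound. Continuity in $x$ (from $\KK$-regularity) is used only to upgrade $\esssup_x$ to $\sup_x$, not to propagate convergence across a null set. Your ``Sobolev embedding'' and ``Schur-type argument'' do not substitute for this step.
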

\begin{remark}
In Section 6, we obtain $\KK$-regularity as a consequence of local $\rC^1$ rotations reducibility for the corresponding Schr\"odinger cocycles. Therefore, it holds in all considered cases.
\end{remark}
In order to prove Theorem \ref{th_main_abstract}, we will need several additional calculations with duality involving direct integrals. Each of the families $\{H_x\}_{x\in \T^d}$ and $\{L_{\theta}\}_{\theta\in \T}$ can be considered as a single operator in the appropriate direct integral space:
$$
\mathfrak{H}:= \int_{\T^d}^{\oplus} \ell^2(\Z)\,dx,\quad \widetilde{\mathfrak{H}}= {\int_{\T}^{\oplus}\ell^2(\Z^d)}\,d\theta.
$$
Denote the unitary duality operator $\mathcal U\colon \mathfrak{H}\to \widetilde{\mathfrak{H}}$  on functions $\Psi=\Psi(x,n)$ by
\begin{equation}\label{u_def}
(\mathcal {U} \Psi)(\theta,m)=\widetilde{\Psi}(\theta+ m\cdot\alpha,m),
\end{equation}
where $\widetilde\Psi$ denotes the Fourier transform in both discrete and continuous variables: 
\beq
\label{eq_dual_fourier_def}
\widetilde {\Psi}(\theta,m)=\sum\limits_{n\in \Z}\,\int_{\T^d}e^{2\pi i n\theta-2\pi i  m\cdot x}\Psi(x,n)\,dx.
\eeq
In the notation, we will always write the continuous variables before discrete variables in the arguments of functions, even when they roles are switched under duality. As mentioned above, the operator families $\{H_x\}_{x\in \T^d}$ and $\{L_{\theta}\}_{\theta\in \T}$ can be represented by direct integrals
$$
\mathcal{H}:=\int_{\T^d}^{\oplus} H_x\,dx,\quad {\mathcal{L}}:=\int_{\T}^{\oplus} L_\theta\,d\theta.
$$
Aubry duality (see, for example, \cite{GJLS}) can be formulated as the unitary equivalence of the above direct integrals:
\begin{equation}\label{Duality}
\mathcal{U} \mathcal H \mathcal U^{-1}={\mathcal L}.
\end{equation}
One can apply duality to other operators and operator families on $\ell^2(\Z)$. For example, the operator family corresponding to the operator $A$ (constant in $x$) has the following dual family:
$$
\mathcal U\l(\int_{\T^d}^{\oplus}A\,dx\r)\mathcal U^{-1}=\int_{\T}^{\oplus}\widetilde{A}(\theta)\,d\theta,
$$
where
\begin{equation}
\label{A_dual}
(\widetilde A(\theta)\psi)(m)=2\sin 2\pi (m\cdot\alpha+\theta)\psi(m),\quad m\in \Z^d.
\end{equation}
Note that an $x$-independent family may become $\theta$-dependent after the duality transformation, and vice versa. For any (Borel) function $f$, we have
\beq
\label{eq_function_directintegral}
\mathcal U f(\mathcal H)\,\mathcal U^{-1}=\mathcal Uf\l(\int_{\T}^{\oplus}H_x\,dx\r)\mathcal U^{-1}=\mathcal U\l(\int_{\T}^{\oplus}f(H_x)\,dx\r)\mathcal U^{-1}=\int_{\T^d}^{\oplus}f(L_{\theta})\,d\theta=f(\mathcal L).
\eeq
For a Borel subset $\KK$, denote 
$$
\widetilde A(\theta,\KK)=\one_{\KK}(L_{\theta})\widetilde{A}(\theta)\one_{\KK}(L_{\theta}).
$$
Then, one can check that $\wt A(\theta,\KK)$ is dual to $A(x,\KK)$:
$$
\mathcal U\frac{1}{T}\int_0^T\l(\int_{\T^d}^{\oplus}e^{i H_xt}A(x,\KK)e^{-i H_x t}\,dx\r)dt\,\,\mathcal U^{-1}=\frac{1}{T}\int_0^T\l(\int_{\T}^{\oplus}e^{i L_{\theta} t}\widetilde{A}(\theta,{\KK})e^{-i L_{\theta} t}\,d\theta\r)\,dt.
$$
The following proposition is, essentially, established in \cite{Kach1} for the case $\KK=\sigma(H_x)$ in a slightly different form. We include most of the proof for the convenience of the reader.
\begin{prop}
\label{prop_qtilde}
Under the assumptions of Theorem $\ref{th_main_abstract}$, denote by $E_k(\theta)$, $\psi_k(\theta)$ the eigenvalues and eigenfunctions of $L_\theta(\KK)$ $($the exact choice of parametrization does not matter$)$. Then, for almost every $\theta\in \T$, the following limit
$$
\widetilde Q(\theta,\KK):=\slim_{T\to +\infty} \frac{1}{T}\int_0^T e^{i t L_{\theta}}\widetilde A(\theta,\KK)e^{-i t L_{\theta}}\,dt
$$
exists and is a diagonal operator in the representation of eigenvectors of $L_{\theta}(\KK)$. More precisely,
\beq
\label{eq_qtilde_formula}
\widetilde Q(\theta,\KK)\psi_k(\theta)=\frac{1}{\pi N'(E_k(\theta))}\psi_k(\theta).
\eeq
As a consequence, for almost every $\theta\in \T$ we have
$$
\widetilde Q(\theta,\KK)=g_{\KK}(L_{\theta}).
$$
\end{prop}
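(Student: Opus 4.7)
The plan is to split the argument into three steps: (i) use the pure point nature of $L_\theta(\KK)$ on $\KK$ to reduce the time average to a ``diagonal part'' in the eigenbasis via a mean ergodic computation, (ii) compute the diagonal matrix elements of $\widetilde A(\theta,\KK)$ using a Hellmann--Feynman type identity, and (iii) identify the resulting diagonal operator with $g_{\KK}(L_\theta)$ via the relation between the rotation number and the integrated density of states.

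For step (i), by Theorem \ref{main_localization}(3), for almost every $\theta\in \T$ the restriction $L_\theta(\KK)$ has pure point spectrum with a complete orthonormal system of eigenfunctions $\{\psi_k(\theta)\}$, and (as recorded in the proof of that theorem) the eigenvalues are simple. Expanding $\phi\in \Ran\one_{\KK}(L_\theta)$ as $\phi=\sum_k c_k\psi_k(\theta)$ and computing
$$
e^{itL_\theta}\widetilde A(\theta,\KK) e^{-itL_\theta}\phi=\sum_{j,k}c_k e^{it(E_j(\theta)-E_k(\theta))}\langle \widetilde A(\theta,\KK)\psi_k(\theta),\psi_j(\theta)\rangle \psi_j(\theta),
$$
time-averaging over $[0,T]$ kills the off-diagonal terms $j\neq k$ (by simplicity and the standard bound $|\frac{1}{T}\int_0^T e^{it\lambda}dt|\le \frac{2}{T|\lambda|}$) and leaves the diagonal terms. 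Boundedness $\|\widetilde A(\theta,\KK)\|\le 2$ together with dominated convergence upgrades pointwise convergence on eigenvectors to strong convergence on the closed span, i.e. on $\Ran\one_{\KK}(L_\theta)$; on the orthogonal complement both sides vanish. Hence the strong limit exists and
$$
\widetilde Q(\theta,\KK)\psi_k(\theta)=\langle \widetilde A(\theta,\KK)\psi_k(\theta),\psi_k(\theta)\rangle\,\psi_k(\theta).
$$

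For step (ii), observe from \eqref{eq_L_def} that $\partial_\theta L_\theta=-2\pi\widetilde A(\theta)$, where the derivative is understood pointwise as a bounded multiplication operator on $\ell^2(\Z^d)$. By Theorem \ref{main_localization}(3), every eigenfunction $\psi_k(\theta)$ is of the form $T^{\ell_k}\psi(\theta-\ell_k\cdot\alpha)$ with $E_k(\theta)=E(\theta-\ell_k\cdot\alpha)$, so the regularity of $E$ and $\psi$ as functions of $\theta$ (inherited from the reducibility data) makes a Hellmann--Feynman computation legitimate almost everywhere in $\theta$: differentiating $L_\theta\psi_k(\theta)=E_k(\theta)\psi_k(\theta)$ in $\theta$, taking the inner product with $\psi_k(\theta)$, and using $\|\psi_k(\theta)\|=1$ yields
$$
-2\pi\langle \widetilde A(\theta)\psi_k(\theta),\psi_k(\theta)\rangle=E_k'(\theta).
$$
Since $\one_{\KK}(L_\theta)\psi_k(\theta)=\psi_k(\theta)$, the same identity holds with $\widetilde A(\theta,\KK)$ in place of $\widetilde A(\theta)$.

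For step (iii), recall that on $\Theta$ we have $\rho(E(\theta))=\theta$, so $E'(\theta)=1/\rho'(E(\theta))$ almost everywhere, and $N(E)=1-2\rho(E)$ gives $\rho'(E)=-N'(E)/2$. Combining these with the identity from step (ii) (applied to $\theta-\ell_k\cdot\alpha$) yields
$$
\langle \widetilde A(\theta,\KK)\psi_k(\theta),\psi_k(\theta)\rangle=-\frac{E'(\theta-\ell_k\cdot\alpha)}{2\pi}=\frac{1}{\pi N'(E_k(\theta))},
$$
which is exactly \eqref{eq_qtilde_formula}. Since $\{\psi_k(\theta)\}$ is a complete system of eigenfunctions of $L_\theta(\KK)$ with eigenvalues $E_k(\theta)\in \KK$, the operator $\widetilde Q(\theta,\KK)$ acts as the multiplication operator $g_{\KK}$ in the functional calculus of $L_\theta$, i.e.\ $\widetilde Q(\theta,\KK)=g_{\KK}(L_\theta)$ for almost every $\theta\in\T$.

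The main obstacle I anticipate is justifying the Hellmann--Feynman differentiation in a measurable, non-perturbative setting: the eigenvalue branches $E_k(\theta)$ are only known to be continuous/measurable a priori, and the pointwise derivative $E'(\theta)$ must be interpreted via the smooth parametrization coming from $E\mapsto \rho(E)$ (which is monotone and differentiable almost everywhere on $\KK$ by absolute continuity of the density of states). Taking care of this measurability, plus verifying the simplicity needed for the off-diagonal oscillatory terms to vanish in the mean ergodic step, are the points where one needs to invoke the outputs of Theorem \ref{main_localization} in full.
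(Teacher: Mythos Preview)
Your Step (i) coincides with the paper's argument. The real divergence is in computing the diagonal entry $\langle\widetilde A(\theta)\psi_k(\theta),\psi_k(\theta)\rangle$. You differentiate the eigenvalue equation in $\theta$ (Hellmann--Feynman) and then invert $\rho$. The paper instead works \emph{pointwise} in $\theta$: it rewrites the diagonal entry as the Wronskian-type quantity $d_k(\theta)$ of \eqref{eq_trivial_kernel}--\eqref{eq_dk_equality}, builds from the Bloch wave $f_k$ an explicit conjugation $B(x,\theta)$, and then appeals to Kotani theory (as in \cite{Kach1}) to identify $d_k(\theta)$ with $1/(\pi N'(E_k(\theta)))$. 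This uses the $m$-function description of the absolutely continuous spectral density rather than any $\theta$-derivative.

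The gap in your route is the regularity needed for Hellmann--Feynman. Justifying $E_k'(\theta)=-2\pi\langle\widetilde A(\theta)\psi_k(\theta),\psi_k(\theta)\rangle$ requires at least continuity of $\theta\mapsto\psi_k(\theta)$ (up to phase) near the given point, and the reducibility data does \emph{not} supply it: the conjugations $B(E;\cdot)$ are $\rC^s$ in $x$ but only \emph{measurable} in $E$ --- see the discussion following \eqref{eq_sobolev_K}, where measurability of $E\mapsto B(E)$ is recovered a posteriori, with no continuity claimed. Since $\psi(\theta;\cdot)$ is (the Fourier transform of) a normalized column of $B(E(\theta);\cdot)$, you inherit only measurability in $\theta$. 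In the dense pure point regime one cannot fall back on perturbation theory to get continuity of the eigenprojections either. Your closing paragraph correctly flags a difficulty, but the actual obstruction is the $\theta$-regularity of $\psi_k$, not of $E$; the paper's Kotani-theoretic computation bypasses this by never comparing different values of $\theta$.
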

\begin{proof}
We only sketch the main ideas, since most of the argument is contained in \cite{Kach1}. The existence of the limit and the fact that it is diagonal in the basis of eigenvectors of $L_{\theta}$ follows from the following standard calculation:
$$
\frac{1}{T}\int_0^T\l\< e^{i t L_{\theta}}\wt A(\theta,\KK)e^{-i t L_{\theta}}\psi_k(\theta),\psi_{\ell}(\theta)\r\>\,dt=\l(\frac{1}{T}\int_0^T e^{i t (E_{\ell}(\theta)-E_k(\theta))}\,dt\r)\<\wt A(\theta)\psi_k(\theta),\psi_{\ell}(\theta)\>
$$
and the fact that
$$
\l|\frac{1}{T}\int_0^T e^{i t (E_{\ell}(\theta)-E_k(\theta))}\,dt\r|\le 1;\quad \lim\limits_{T\to +\infty}\frac{1}{T}\int_0^T e^{i t (E_{\ell}(\theta)-E_k(\theta))}\,dt=\begin{cases} 1,&E_k(\theta)=E_{\ell}(\theta);\\
	0,&E_k(\theta)\neq E_{\ell}(\theta).
\end{cases}
$$
As a consequence, we obtain
\beq
\label{eq_qdiag_1}
\<\wt Q(\theta,\KK)\psi_k(\theta),\psi_k(\theta)\>=\<\wt A(\theta)\psi_k(\theta),\psi_k(\theta)\>=\sum_{m\in \Z^d}2\sin 2\pi (\theta+m\cdot\alpha) |\psi_k(\theta,m)|^2.
\eeq
In order to establish \eqref{eq_qtilde_formula}, consider the Fourier transforms of the eigenvectors of $L_{\theta}(\KK)$:
$$
f_k(x,\theta)=\sum_{n\in \Z^d}e^{2\pi i n \cdot x}\psi_k(\theta,n),
$$
where $\psi_k(\theta,n)$ is the $n$th component of $\psi_k(\theta)$ (the latter is considered as a vector from $\ell^2(\Z)$). If $\theta\notin \Z+\alpha\cdot\Z^d$, then (see Appendix C of \cite{AJM}, and also Remark 5.1 in \cite{jk2})
\beq
\label{eq_trivial_kernel}
d_k(\theta):=e^{2\pi i \theta}f_k(x,\theta)\overline{f_k(x-\alpha,\theta)}-e^{-2\pi i \theta}\overline{f_k(x,\theta)}f_k(x-\alpha,\theta)\neq 0.
\eeq
By direct calculation and \eqref{eq_qdiag_1}, we also have 
\beq
\label{eq_dk_equality}
d_k(\theta)=\<\wt Q(\theta,\KK)\psi_k(\theta),\psi_k(\theta)\>,
\eeq
which implies that $\ker\wt {\mathcal Q}=\{0\}$. Let
$$
S_{E-v}(x):=\begin{pmatrix}
	E-v(x)&-1\\1&0
\end{pmatrix}
$$
be the Schr\"odinger cocycle, and consider a matrix function $B(x,\theta)$ defined by
$$
B(x,\theta):=\frac{1}{|d_k(\theta)|^{1/2}}\begin{pmatrix}
f_k(x,\theta)&\overline{f_k(x,\theta)}\\
e^{-2\pi i \theta}f_k(x-\alpha,\theta)&e^{2\pi i \theta}\overline{f_k(x-\alpha,\theta)}
\end{pmatrix};
$$
note that the matrix is invertible since $d_k(\theta)\neq 0$. Then
$$
B(x+\alpha,\theta)^{-1}S_{E-v}(x)B(x,\theta)=\begin{pmatrix}
	e^{2\pi i \theta}&0\\ 0 & e^{-2\pi i \theta}
\end{pmatrix}.
$$
Kotani's theory (see the argument in \cite{Kach1} with additional references) implies that there exists a subset $\mathcal E\subset \KK$ of full Lebesgue measure (as a consequence, full spectral measure for each $H_x(\KK)$) such that, if $E_k(\theta)$ is constructed above and $E_k(\theta)\in \mathcal E$, then
$$
d_k(\theta)=\frac{1}{\pi N'(E_k(\theta))}.
$$
Comparing the last equality with \eqref{eq_dk_equality}, both of which hold for almost every $\theta\in\T$, we complete the proof. Note that the function $g$ is only defined Lebesgue almost everywhere on $\KK$. However, for almost every $\theta\in \T$, all eigenvalues $E_k(\theta)$ will be at differentiability points of $N$, and hence $g(L_\theta)$ will be well defined. As a consequence, using \eqref{eq_function_directintegral}, we have
\begin{align}\label{eq_q_direct}
\mathcal{Q}(\mathcal{K})&:=\int_{\T^d}^{\oplus}Q(x,\KK)\,dx=\int_{\T^d}^{\oplus} g_{\KK}(H_x)\,dx\\ \nonumber
&=\mathcal{U}^{-1}\l(\int_{\T}^{\oplus}\widetilde{Q}(\theta,\KK)\,d\theta\r)\mathcal{U}=\mathcal{U}^{-1} \l(\int_{\T}^{\oplus} g_{\KK}(L_\theta)\,dx\r)\, \mathcal{U}:=\mathcal{U}^{-1}\widetilde{\mathcal{Q}}(\mathcal{K})\, \mathcal{U}.\,\qedhere
\end{align}
\end{proof}

Denote by $Q(x,T,\KK)$ the pre-limit expression:
\beq
\label{eq_prelimit}
Q(x,T,\KK):=\frac{1}{T}\int_0^T e^{i t H_x}A(x,\KK)e^{-i t H_x}\,dt ,\quad \mathcal{Q}(T,\KK)=
\int_{\T^d}^{\oplus}Q(x,T,\KK)\,dx.
\eeq
We would like to show the following, for all $p\in \Z$:
\begin{equation}\label{limit_1}
Q(x,T,\KK)\delta_p\to Q(x,\KK)\delta_p,
\end{equation}
where $\{\delta_p\}_{p\in \Z}$ denote the standard basis vectors in $\ell^2(\Z)$.
Let
$$
f_{p,T}(x)=Q(x,T,\KK)\delta_p,\quad f_p(x)=Q(x,\KK)\delta_p.
$$
Denote by $f_{p,T}(x,n)$ and $f_{p}(x,n)$ the $n$-th components of $f_{p,T}(x)$, $f_{p}(x)$ respectively, where $n\in \Z$.
One can treat $f_{p,T}$ and $f_p$ as elements of $\rL^2(\T^d\times \Z)$. Denote also by $\widetilde{f}_{p,T}(\theta,m)$, $\widetilde{f}_p(\theta,m)$ the Fourier transforms of $f_{p,T}$, $f_p$ in both variables defined as in \eqref{eq_dual_fourier_def}:
$$
\widetilde {f}_{p,T}(\theta,m)=\sum\limits_{n\in \Z}\,\int_{\T^d}e^{2\pi i n\theta-2\pi i  m\cdot x}f_{p,T}(x,n)\,dx.
$$
\begin{lemma}\label{lemma_L1}
For any $p\in\Z$, $x\in\T^d$, and $T>0$, we have
\begin{align}
\label{eq_conclusion_211}
\|f_{p,T}(x)-f_{p}(x)\|_{\ell^2(\Z)}^2\leq \int_{\T}\left(\sum\limits_{m\in\Z^d}|\widetilde{f}_{p,T}(\theta,m)-\widetilde{f}_p(\theta,m)|\right)^2d\theta.
\end{align}
\end{lemma}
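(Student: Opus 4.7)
\textbf{Proof plan for Lemma \ref{lemma_L1}.}

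The plan is to exploit Parseval's identity together with a trivial pointwise triangle inequality in the dual variable. The key observation is that the unitary duality $\mathcal{U}$ of \eqref{u_def} encodes both an $x$-Fourier transform and an $n$-Fourier transform, and that we may interchange a Parseval identity in $n$ (using the partial Fourier transform on $\Z$) with a crude $\ell^1$-bound in $m$. The pointwise-in-$x$ nature of the claim prevents us from using Parseval globally; we instead use it only in the $n$ variable.

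First I would fix $x\in \T^d$ and consider the vector $g(\cdot):=f_{p,T}(x,\cdot)-f_p(x,\cdot)$ as an element of $\ell^2(\Z)$. Inverting the definition \eqref{eq_dual_fourier_def} gives the partial Fourier representation
\begin{equation*}
f_{p,T}(x,n)-f_p(x,n)=\int_{\T} e^{-2\pi i n\theta}\,\Phi(\theta,x)\,d\theta,
\quad \Phi(\theta,x):=\sum_{m\in\Z^d}e^{2\pi i m\cdot x}\bigl(\widetilde{f}_{p,T}(\theta,m)-\widetilde{f}_p(\theta,m)\bigr),
\end{equation*}
valid at least formally; convergence is guaranteed provided the right-hand side of \eqref{eq_conclusion_211} is finite (otherwise the claim is trivial). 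The function $\theta\mapsto \Phi(\theta,x)$ is then the Fourier series of $g$ on $\T$.

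Next I would apply Parseval's identity for the pair $\ell^2(\Z)\leftrightarrow \rL^2(\T)$ at the fixed $x$:
\begin{equation*}
\|f_{p,T}(x)-f_p(x)\|_{\ell^2(\Z)}^2=\sum_{n\in\Z}|g(n)|^2=\int_{\T}|\Phi(\theta,x)|^2\,d\theta.
\end{equation*}
Finally, since $|e^{2\pi i m\cdot x}|=1$ for every $m$, the triangle inequality applied inside the absolute value gives
\begin{equation*}
|\Phi(\theta,x)|\le \sum_{m\in\Z^d}\bigl|\widetilde{f}_{p,T}(\theta,m)-\widetilde{f}_p(\theta,m)\bigr|,
\end{equation*}
and squaring and integrating in $\theta$ yields exactly \eqref{eq_conclusion_211}.

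There is no real obstacle here: the only subtlety is that we use Parseval only in the $n$ variable (where it is an equality) and deliberately throw away the phases $e^{2\pi i m\cdot x}$ in the $m$ variable in order to obtain an $x$-independent majorant. This trade-off, Parseval in $n$ versus $\ell^1$-triangle inequality in $m$, is precisely what converts an $\rL^2(\T^d\times \Z)$-style duality into a bound that holds for every single $x$. The bound is of course lossy, but it is exactly of the right form to be combined later with the Sobolev-type dual localization estimates from Theorems \ref{main_localization} and \ref{th_sobolev}, which control the $m$-decay of $\widetilde{f}_{p,T}-\widetilde{f}_p$ in expectation over $\theta$.
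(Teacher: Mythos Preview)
Your proof is correct and follows essentially the same route as the paper: apply Parseval in the $n$-variable to pass from $\ell^2(\Z)$ to $\rL^2(\T)$, then bound the partial transform $\Phi(\theta,x)=\hat f_{p,T}(x,\theta)-\hat f_p(x,\theta)$ by the $\ell^1$-sum of its $m$-Fourier coefficients via the triangle inequality. The paper phrases the last step as $\esssup_x|\hat f_{p,T}(x,\theta)-\hat f_p(x,\theta)|\le \sum_m|\widetilde f_{p,T}(\theta,m)-\widetilde f_p(\theta,m)|$ and invokes continuity in $x$ to pass to the supremum, but since you work at a fixed $x$ throughout, you avoid even that small detour.
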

\begin{proof}
First, let us note that both $x\mapsto f_{p,T}(x)$ and $x\mapsto f_p(x)$ are continuous as maps from $\T^d$ to $\ell^2(\Z)$. In particular, they are continuous component-wise. Denote by $\hat{f}_{p,T}(x,\theta)$ the Fourier transform only in the variable $n$, and same for $\hat{f}_p(x,\theta)$. Using the Parseval's identity, continuity in $x$, and $\ell^1$ bound for the Fourier transform, we have the following:
\begin{multline}
\sup_x\|f_{p,T}(x)-f_{p}(x)\|_{\ell^2(\Z)}^2=\sup_x\sum_{n\in \Z}|f_{p,T}(x,n)-f_{p}(x,n)|^2=\sup_x \int_{\T}|\hat{f}_{p,T}(x,\theta)-\hat{f}_p(x,\theta)|^2d\theta
\\
\le \int_{\T}\l(\esssup_x |\hat f_{p,T}(x,\theta)-\hat f_{p}(x,\theta)|\r)^2\,d\theta\le\int_{\T}\left(\sum\limits_{m\in\Z^d}|\widetilde{f}_{p,T}(\theta,m)-\widetilde{f}_{p}(\theta,m)|\right)^2\,d\theta.
\end{multline}
\end{proof}
\begin{remark}
It is crucial that the left hand side of \eqref{eq_conclusion_211} is continuous in $x$, otherwise we would not have been able to obtain convergence for all $x\in \T^d$, as the right hand side of \eqref{eq_conclusion_211} does not allow to recover any data about measure zero subsets of $\T^d$ in the variable $x$. The said continuity, ultimately, reduces to the assumption of $\KK$-regularity of the family $\{H_x\}_{x\in \T}$.
\end{remark}
\begin{remark}
Let $\mathcal U$ be the duality operator. Then
$$
\widetilde f_{p,T}(\theta,m)=(\mathcal U f_{p,T})(\theta-m\cdot\alpha,m),\quad \widetilde f_{p}(\theta,m)=(\mathcal U f_{p})(\theta-m\cdot\alpha,m).
$$
Therefore, in order to show \eqref{limit_1}, we can apply Lemma \ref{lemma_L1} and reduce it to a convergence statement about the images of $f_{p,T}$ under duality.
\end{remark}
We will use the following notation for the dual pre-limit expressions:
$$
\widetilde Q(\theta,T,\KK):=
\frac{1}{T}\int_0^T e^{i t L_{\theta}}\widetilde A(\theta,\KK)
e^{-i t L_{\theta}}\,dt,\quad \widetilde{\mathcal{Q}}(T,\KK)=
\int_{\T}^{\oplus}\widetilde{Q}(\theta,T,\KK)\,d\theta.
$$
Finally, consider $\delta_p$ as an element of $\rL^2(\T^d\times \Z)$ that is a constant function in the $x$ variable. Then, it's Fourier transform in both variables is equal to
$$
(\widetilde\delta_p)(\theta,m)=(\mathcal U\delta_p)(\theta-m\cdot\alpha,m)=e^{2\pi i p\theta}\delta_0(m),
$$
which implies
\begin{small}
$$
\widetilde f_{p,T}(\theta,m)=(\mathcal U \mathcal Q(T,\KK) \delta_p )(\theta-m\cdot\alpha,m)=(\widetilde{\mathcal Q}(T,\KK)\,\mathcal U{\delta}_p)(\theta-m\cdot\alpha,m)=e^{2\pi i pm\cdot\alpha}\wt Q(\theta-m\cdot\alpha,T,\KK)\delta_0,
$$
\end{small}
and similarly
\begin{small}
$$
\widetilde f_{p}(\theta,m)=(\mathcal U \mathcal Q(\KK) \delta_p )(\theta-m\cdot\alpha,m)=(\widetilde{\mathcal Q}(\KK)\,\mathcal U{\delta}_p)(\theta-m\cdot\alpha,m)=e^{2\pi i pm\cdot\alpha}\wt Q(\theta-m\cdot\alpha,\KK)\delta_0.
$$
\end{small}
As a consequence, we can rewrite the conclusion of Lemma \ref{lemma_L1} as
\begin{multline}
\label{eq_ell2_pre}
\sup_x \|f_{p,T}(x)-f_p(x)\|_{\ell^2(\Z)}\le \l(\int_{\T}\left(\sum\limits_{m\in\Z^d}\l|\widetilde{f}_{p,T}(\theta,m)-\widetilde{f}_{p}(\theta,m)\r|\right)^2\,d\theta\r)^{1/2}
\\=\l(\int_{\T}\l(\sum_{m\in \Z^d}\l|\l(\widetilde Q(\theta-m\cdot\alpha,T,\KK)\delta_0-\widetilde Q(\theta-m\cdot\alpha,\KK)\delta_0\r)(m)\r|\r)^2d\theta\r)^{1/2}\\
\le \sum_{m\in \Z^d}\l(\int_{\T}\l|\l(\widetilde Q(\theta-m\cdot\alpha,T,\KK)\delta_0-\widetilde Q(\theta-m\cdot\alpha,\KK)\delta_0\r)(m)\r|^2 d\theta\r)^{1/2}\\
=\sum_{m\in \Z^d} \l\|\l\<\delta_m,\wt Q(\cdot,T,\KK)\delta_0-\wt Q(\cdot,\KK)\delta_0\r\>\r\|_{\rL^2(\T)}=\l\|\wt{\mathcal Q}(T,\KK)\delta_0-\wt {\mathcal Q}(\KK)\delta_0\r\|_{\ell^1(\Z^d;\rL^2(\T))}.
\end{multline}
The factor $e^{2\pi i p m\cdot\alpha}$ was absorbed into the absolute value, and the second inequality is the triangle inequality. Let
$$
(P_N\Psi)(\theta,m)=\begin{cases} \Psi(\theta,m),&|m|\le N\\
0,&|m|>N
\end{cases}
$$
be the projection onto a neighborhood of the origin in discrete $\Z^d$ variable. The following is the main technical estimate of this section that uses the localization bounds.
\begin{lemma}
Suppose that the family $\{L_{\theta}\}_{\theta\in \T}$ satisfies Sobolev localization on $\KK$ in the sense of Theorem $\ref{th_sobolev}$ with $s>5d/2$. Then the norms
$$
\|\mathcal {Q}(T,\KK)\delta_0\|_{\ell^1(\Z^d;\ell^2(\T))}
$$
are bounded uniformly in $T$. As a consequence,
$$
\|(1-P_N)\widetilde{\mathcal Q}(T,\KK)\delta_0\|_{\ell^1(\Z^d;\rL^2(\T))}\le c(N),
$$
where $c(N)\to 0$ as $N\to+\infty$, uniformly in $T$.
\end{lemma}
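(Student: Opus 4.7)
The plan is to reduce the $\rL^2(\T)$-norm estimate to an $\rL^1(\T)$-integral via the elementary pointwise bound
$$
|\<\delta_m,\widetilde Q(\theta,T,\KK)\delta_0\>|^2 \le \|\widetilde Q(\theta,T,\KK)\| \cdot |\<\delta_m,\widetilde Q(\theta,T,\KK)\delta_0\>| \le 2\,|\<\delta_m,\widetilde Q(\theta,T,\KK)\delta_0\>|,
$$
which holds uniformly in $\theta,T$ because $\|\widetilde Q(\theta,T,\KK)\|\le\|\widetilde A(\theta)\|\le 2$. Expanding $\widetilde A(\theta)$ as a diagonal multiplication operator in the standard basis and using that $\one_\KK(L_\theta)$ commutes with $e^{-itL_\theta}$, the matrix element inside the time average satisfies
$$
|\<\delta_m, e^{itL_\theta}\one_\KK\widetilde A(\theta)\one_\KK e^{-itL_\theta}\delta_0\>| \le 2\sum_{n\in\Z^d} |\<\delta_n,\one_\KK e^{-itL_\theta}\delta_m\>|\,|\<\delta_n,\one_\KK e^{-itL_\theta}\delta_0\>|.
$$

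Integrating in $\theta$, interchanging the order of summation and integration, and applying Theorem~\ref{th_sobolev} to the inner $\theta$-integral, I bound each summand in two ways: if I trivially replace the first factor by $1$, the remaining integral is at most $\hat h(-n)$, while replacing the second factor by $1$ leaves at most $\hat h(m-n)$. Taking the better of the two choices and noting that the bound from Theorem~\ref{th_sobolev} is independent of $t$, I obtain
$$
\int_\T|\<\delta_m,\widetilde Q(\theta,T,\KK)\delta_0\>|\,d\theta \le 2\sum_{n\in\Z^d}\min\bigl(\hat h(m-n),\hat h(-n)\bigr),
$$
uniformly in $T$.

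Next I split $\sum_n$ according to whether $|n|\le |m|/2$ or $|n|>|m|/2$, forcing one of $|m-n|, |n|$ to exceed $|m|/2$ in each region. On each piece a Cauchy--Schwarz against the weight $(1+|p|)^{-2s}$, combined with $h\in\rH^s(\T^d)$ and $s>d/2$, yields the tail bound $\sum_{|p|\ge R}\hat h(p)\le C R^{d/2-s}$. Combining,
$$
\sum_n\min\bigl(\hat h(m-n),\hat h(-n)\bigr)\le C(1+|m|)^{d/2-s},
$$
so after square-rooting to undo the first step,
$$
\|\<\delta_m,\widetilde Q(\cdot,T,\KK)\delta_0\>\|_{\rL^2(\T)} \le C(1+|m|)^{-(s-d/2)/2},
$$
with $C$ independent of $T$.

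Finally, summing over $m\in\Z^d$, the series $\sum_m (1+|m|)^{-(s-d/2)/2}$ converges exactly when $(s-d/2)/2>d$, i.e., when $s>5d/2$, matching the hypothesis; this yields the uniform bound in $T$. The tail estimate is immediate: $\|(1-P_N)\widetilde{\mathcal Q}(T,\KK)\delta_0\|_{\ell^1(\Z^d;\rL^2(\T))}\le\sum_{|m|>N}C(1+|m|)^{-(s-d/2)/2}=:c(N)$, with $c(N)\to 0$ independently of $T$. The main obstacle, already flagged in the introduction, is that there is no effective way to upgrade the $\rL^1_\theta$ bound of Theorem~\ref{th_sobolev} to an $\rL^r_\theta$ bound with $r>1$; this crude trading of one factor for the $\rL^\infty$ bound $1$ is exactly what forces the regularity threshold $s>5d/2$, rather than the $s>d$ needed for the local result.
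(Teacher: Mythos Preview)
Your proof is correct and reaches the same threshold $s>5d/2$ as the paper, but by a genuinely different and more elementary route. The paper, after the same initial $\rL^2\to\rL^1$ reduction, uses the pointwise inequality $|ab|\le |a|^{1/2}|b|^{1/2}$ (valid since $|a|,|b|\le 1$), then Cauchy--Schwarz in $\theta$, arriving at the convolution expression $\sum_n(h^{1/2}*h^{1/2})^{1/2}(n)$; the summability is then deduced from a chain of weighted-$\ell^2$ square-root and convolution lemmas collected in the Appendix. You instead bound $\int_\T|ab|\,d\theta\le\min\bigl(\int|a|,\int|b|\bigr)$ by trading one factor for its $\rL^\infty$ bound, then use only an elementary tail estimate for $\sum_{|p|\ge R}\hat h(p)$ via Cauchy--Schwarz against the Sobolev weight. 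Your argument is shorter and avoids the Appendix machinery entirely, while the paper's convolution framework is perhaps more flexible if one wanted to track different exponents or interpolate; but as written, both lose exactly the same amount of regularity and land on $s>5d/2$.
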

\begin{proof}
First, let us replace $\wt Q(\theta,T,\KK)$ by the non-averaged expression $e^{i t L_{\theta}}\widetilde A(\theta,\KK) e^{-i t L_{\theta}}$ (thus proving a stronger inequality). As a consequence, we would like to estimate
\begin{multline}
\label{eq_triangle} 
\sum_{n\in \Z^d}\l(\int_{\T}\l|\<\delta_n,e^{i t L_{\theta}}\widetilde A(\theta,\KK) e^{-i t L_{\theta}}\delta_0\>\r|^2\,d\theta\r)^{\frac{1}{2}}
\le 2\sum_{n\in \Z^d}\l(\int_{\T}\l|\<\delta_n,e^{i t L_{\theta}}\widetilde A(\theta,\KK) e^{-i t L_{\theta}}\delta_0\>\r|\,d\theta\r)^{\frac{1}{2}}\\
\le 2\sum_{n\in \Z^d}\l(\sum_{k\in \Z^d}\int_{\T}\l|\<\one_{\KK}(L_{\theta})e^{-i L_{\theta}t}\delta_n,\delta_k\>\<\wt A(\theta)\delta_k,\one_{\KK}(L_{\theta})e^{-i L_{\theta}t}\delta_{0}\>\r|\,d\theta\r)^{\frac{1}{2}}\\
\le 4\sum_{n\in \Z^d}\l(\sum_{k\in \Z^d}\int_{\T}\l|\<\one_{\KK}(L_{\theta})e^{-i L_{\theta}t}\delta_n,\delta_k\>\<\delta_k,\one_{\KK}(L_{\theta})e^{-i L_{\theta} t}\delta_{0}\>\r|\,d\theta\r)^{\frac{1}{2}},
\end{multline}
where in the second inequality we used the fact that the integrand is bounded by $2$ in absolute value to replace $\rL^2$ norm by $\rL^1$ norm, and then used the fact that $\wt A(\theta)$ is a diagonal operator acting on $\delta_k$ as a scalar (we also transferred $\one_{\KK}(L_{\theta})$ to $e^{-i L_{\theta} t}$, and hence there is no more $\KK$ in $\wt A(\theta)$). 
In the case of sPDL, we can continue the chain of inequalities as follows, using Lemma \ref{lemma_convolution}:
\begin{multline*}
\eqref{eq_triangle}\le 4\sum_{n\in \Z^d}\l(\sum_{k\in \Z^d}\int_{\T}\l|\<\one_{\KK}(L_{\theta})e^{-i L_{\theta}t}\delta_n,\delta_k\>\<\delta_k,\one_{\KK}(L_{\theta})e^{-i L_{\theta} t}\delta_{0}\>\r|\,d\theta\r)^{1/2}\\
\le 4\sum_{n\in \Z^d}\l(\sum_{k\in \Z^d}\int_{\T}|\<\one_{\KK}(L_{\theta})e^{-i L_{\theta}t}\delta_n,\delta_k\>|^{1/2}|\<\delta_k,\one_{\KK}(L_{\theta})e^{-i L_{\theta} t}\delta_{0}\>|^{1/2}\,d\theta\r)^{1/2}\\
\le 4\sum_{n\in \Z^d}\l(\sum_{k\in \Z^d}\l(\int_{\T}|\<\one_{\KK}(L_{\theta})e^{-i L_{\theta}t}\delta_n,\delta_k\>|\,d\theta\int_{\T}|\<\delta_k,\one_{\KK}(L_{\theta})e^{-i L_{\theta} t}\delta_{0}\>|\,d\theta\r)^{1/2}\r)^{1/2}.
\end{multline*}
Let
$$
h_n=\int_{\T}|\<\one_{\KK}(L_{\theta})e^{-i L_{\theta}t}\delta_0,\delta_n\>|\,d\theta,\quad n\in \Z^d.
$$
Then, by covariance, we have the following bound (recall that $*$ denotes the standard convolution for functions on $\Z^d$):
$$
\eqref{eq_triangle}\le 4\sum_{n\in \Z^d\colon |n|>N}\l(h^{1/2}*h^{1/2}\r)^{1/2}(n).
$$
Therefore, in order to obtain decay, we need to verify $\l(h^{1/2}*h^{1/2}\r)^{1/2}\in \ell^1(\Z^d)$. We will need to use some bounds on weighted $\ell^2$ spaces with the norms
$$
\|u\|_{\ell^2_s}^2=\sum_{n\in \Z^d}(1+|n|)^{2s}|u(n)|^2.
$$
Their properties are summarized in the Appendix. Since $h\in \ell^2_s(\Z^d)$, we have the following inclusions, see also Appendix (``$+$'' means the number has to be strictly larger):
$$
h^{1/2}\in \ell^2_{s/2-d/4+};
$$
$$
h^{1/2}*h^{1/2}\in \ell^2_{s-d+};
$$
$$
w:=(h^{1/2}*h^{1/2})^{1/2}\in \ell^2_{s/2-3d/4+};
$$
$$
\{(1+|n|)^{s/2-3d/4}w(n)\}_{n\in \Z^d}\in \ell^2(\Z^d).
$$
In order to get $w$ into $\ell^1(\Z^d)$, we can use H\"older inequality, for which it would be sufficient to have
$$
\{(1+|n|)^{-(s/2-3d/4)}\}_{n\in \Z^d}\in \ell^2(\Z^d).
$$
This, ultimately, gives us the requirement $s/2-3d/4>d/2$, which reduces to $s>5d/2$.
\end{proof}
\begin{cor}
\label{cor_fatou}
The conclusion of Lemma $\ref{lemma_L1}$ also holds for $\wt{\mathcal Q}(\KK)$.
\end{cor}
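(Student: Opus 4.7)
The plan is to derive the bound for the limit object $\wt{\mathcal Q}(\KK)\delta_0$ directly from the uniform-in-$T$ bound for $\wt{\mathcal Q}(T,\KK)\delta_0$ supplied by the preceding lemma, using Fatou's lemma twice. The name of the corollary itself suggests this route, and since the only analytic input needed is already in place, I expect the proof to be short and essentially structural.

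First, I would invoke Proposition \ref{prop_qtilde}: for almost every $\theta\in\T$, the strong operator limit $\wt Q(\theta,T,\KK)\to\wt Q(\theta,\KK)$ on $\ell^2(\Z^d)$ exists as $T\to\infty$. In particular, for each fixed $m\in\Z^d$ and a.e.\ $\theta\in\T$,
\[
\langle\delta_m,\wt Q(\theta,T,\KK)\delta_0\rangle\longrightarrow\langle\delta_m,\wt Q(\theta,\KK)\delta_0\rangle\qquad\text{as }T\to\infty.
\]
Applying Fatou's lemma on $\T$ to the squared scalar and taking square roots yields, for every $m\in\Z^d$,
\[
\|\langle\delta_m,\wt Q(\cdot,\KK)\delta_0\rangle\|_{\rL^2(\T)}\le\liminf_{T\to\infty}\|\langle\delta_m,\wt Q(\cdot,T,\KK)\delta_0\rangle\|_{\rL^2(\T)}.
\]

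Next, I would apply Fatou's lemma a second time, now with respect to the counting measure on $\{m\in\Z^d:|m|>N\}$, to conclude
\[
\|(1-P_N)\wt{\mathcal Q}(\KK)\delta_0\|_{\ell^1(\Z^d;\rL^2(\T))}\le\liminf_{T\to\infty}\|(1-P_N)\wt{\mathcal Q}(T,\KK)\delta_0\|_{\ell^1(\Z^d;\rL^2(\T))}\le c(N),
\]
where the last inequality is exactly the conclusion of the preceding lemma, uniform in $T$. Taking $N=0$ delivers the uniform $\ell^1(\Z^d;\rL^2(\T))$ bound for $\wt{\mathcal Q}(\KK)\delta_0$, while letting $N\to\infty$ preserves the tail estimate with $c(N)\to 0$. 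There is no real obstacle: the argument is purely a two-step lower semicontinuity statement, and all the heavy lifting — the Sobolev localization estimates controlling $\|\wt{\mathcal Q}(T,\KK)\delta_0\|_{\ell^1(\Z^d;\rL^2(\T))}$ uniformly in $T$, and the strong convergence of $\wt Q(\theta,T,\KK)$ — has already been carried out.
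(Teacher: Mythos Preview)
Your proof is correct and follows essentially the same route as the paper's: both pass from the uniform-in-$T$ bound to the limit via Fatou's lemma. The only minor difference is that the paper first uses the $\rL^2(\T\times\Z^d)$ strong convergence of $\wt{\mathcal Q}(T,\KK)$ to $\wt{\mathcal Q}(\KK)$ to extract a subsequence $T_k$ along which $\wt{\mathcal Q}(T_k,\KK)\delta_0\to\wt{\mathcal Q}(\KK)\delta_0$ almost everywhere, and then applies Fatou once; you instead invoke Proposition~\ref{prop_qtilde} to get the a.e.\ $\theta$ convergence directly (no subsequence needed) and apply Fatou twice. Both arguments are sound and essentially equivalent.
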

\begin{proof}
Recall that, being a direct integral, $\wt{\mathcal Q}(T,\KK)$ converges to $\wt{\mathcal Q}(\KK)$ in the strong operator topology on $\rL^2(\T\times\Z^d)$. As a consequence, there is a subsequence of time scales $T_k$ such that $\wt{\mathcal Q}(T_k,\KK)\delta_0$ converges to $\wt{\mathcal Q}(\KK)\delta_0$ almost everywhere on $\T\times\Z^d$ as $k\to \infty$ (here, as before, $\delta_0$ is considered as an element of $\rL^2(\T\times\Z^d)$ constant in $\theta$). Hence, the result follows from Fatou's lemma.
\end{proof}
\subsection*{Conclusion of the proof of Theorem \ref{th_main_abstract}} Since the operator norms of $Q(x,T,\KK)$ are uniformly bounded, it suffices to show that $Q(x,T,\KK)\delta_p\to Q(x,\KK)\delta_p$ for all $p\in \Z$. In other words, it is sufficient to show that the right hand side of \eqref{eq_ell2_pre} converges to zero. Take $N\gg 1$. Using the triangle inequality, Lemma \ref{lemma_L1}, and Corollary \ref{cor_fatou}, we have
\begin{multline*}
\l\|\widetilde{\mathcal Q}(T,\KK)\delta_0-\widetilde{\mathcal Q}(\KK)\delta_0\r\|_{\ell^1(\Z^d;\rL^2(\T))}\\
\le
\l\|P_N\l(\widetilde{\mathcal Q}(T,\KK)\delta_0-\widetilde{\mathcal Q}(\KK)\delta_0\r)\r\|_{\ell^1(\Z^d;\rL^2(\T))}+
\l\|(1-P_N)\l(\widetilde{\mathcal Q}(T,\KK)\delta_0-\widetilde{\mathcal Q}(\KK)\delta_0\r)\r\|_{\ell^1(\Z^d;\rL^2(\T))}\\
\le (2N)^{d/2}\l\|\widetilde{\mathcal Q}(T,\KK)\delta_0-\widetilde{\mathcal Q}(\KK)\delta_0\r\|_{L^{2}(\T\times\Z^d)}+2 c(N),
\end{multline*}
where in the last inequality we use the fact that $\ell^1(\Z^d;\rL^2(\T))$ norm is bounded by the $\rL^2(\T\times\Z^d)$ norm on the range of $P_N$ (with an appropriate constant). Now, since $\|\widetilde{\mathcal Q}(T,\KK)\delta_0-\widetilde{\mathcal Q}(\KK)\delta_0\|_{L^{2}(\T\times\Z^d)}\to 0$, the proof can be completed using the standard $\varepsilon/2$ argument.\,\qed.
\vskip 2mm
\subsection*{The proof of Theorem \ref{th_main_abstract} in the uniform case} Suppose, instead of Theorem \ref{th_sobolev}, we have the conclusion of Lemma \ref{lemma_uniform_localization} with $s>d$. By covariance, we have
$$
h_{n-m}=\esssup_{\theta\in \T}|\<\delta_n,\one_{\KK}(L_{\theta)}e^{itL_{\theta}}\delta_m\>|
$$
for some $h$ which satisfies
$$
h(n)\le M(1+|n|)^{-s}.
$$
Similarly to \eqref{eq_triangle}, we can estimate
$$
\l|\<\delta_n,e^{i t L_{\theta}}\widetilde A(\theta,\KK) e^{-i t L_{\theta}}\delta_0\>\r|\le (h*h)(n)\le \frac{C(M)}{(1+|n|)^{2s-d}}.
$$
Therefore, the conclusion reduces to $\{(1+|n|)^{-(2s-d)}\}_{n\in \Z^d}\in \ell^1(\Z^d)$, which is satisfied for $s>d$.
\subsection{On the proofs of the main results} The proof of Theorem \ref{main2} is complete, modulo $\KK$-regularity which will be established in the next section. To finish the proof of Theorem \ref{main}, consider
$$
\KK_j=\{E\in \KK\colon \|B(E)\|_{\rC^s(\T^d)}\le j\}
$$
and apply the uniform result on each $\KK_j$, together with $\KK$-regularity. To prove Corollary \ref{cor_main3}, follow the same lines as in the proof of Theorem \ref{th_main_abstract}, using Lemma \ref{lemma_convolution} instead of the Sobolev bounds. We did not try to optimize the condition $s>4d$ in this case.
\section{Regularity of the absolutely continuous spectral measures}
This section is mostly expository. Let $\{H_x\}_{x\in \T^d}$ be a quasiperiodic operator family \eqref{h_def}:
$$
(H_{x}\psi)(n)=\psi(n-1)+\psi(n+1)+v(x+n\alpha)\psi(n),\ \ n\in\Z.
$$
For a Borel subset $\KK\subset\R$, we will say that the family of Schr\"odinger cocycles $(\alpha,S_{E-v})$ is $\rC^s$-uniformly rotations reducible on $\KK$, if there exists $c>0$ and a family of matrices $B(E;\cdot)\in \rC^s(\T^d;\SL(2,\R))$, $E\in \KK$, such that
\beq
\label{eq_rotation_red}
B(x+\alpha,E)^{-1}S_{E-v}(x)B(x,E)=R_{\theta(x)}=
\begin{pmatrix}
\cos2 \pi\theta(x) & -\sin2\pi\theta(x)\\
\sin2\pi\theta(x) & \cos2\pi\theta(x)
\end{pmatrix}.
\eeq
where $\theta\in \rC^s(\T;\R)$ and
$$
\|B(\cdot,E)\|_{\rC^s(\T^d;\SL(2,\R))}\le c,\quad \forall E\in \KK.
$$
For a Borel subset $F\subset \R$, denote by
$$
\mu^x_{pq}(F)=\<\delta_p,\one_F(H_x)\delta_q\>
$$
a spectral measure of $H_x$. Clearly, $\mu_{pq}^x=\mu_{p+n,q+n}^{x+n\alpha}$. Hence, one simplify the computations by assuming $p=0$.

Moreover, since $\delta_0$ and $\delta_1$ form a cyclic subspace for $H_x$, one can easily check the following (say, by repeatedly applying $H_x$ to $\delta_0$ or $\delta_1$ and eliminating previous elements by induction):
$$
\delta_n=p_x^{(n-1)}(H_x)\delta_0+q_x^{(n-1)}(H_x)\delta_1,
$$
where $p_x^{(n-1)}$, $q_x^{(n-1)}$ are polynomials of degree $\le n-1$, whose coefficients are $\rC^s$-smooth in $x$. As a consequence, in order to establish smoothness of spectral measures (see below), it would sufficient to consider $\mu_{00}^x$ and $\mu_{01}^x(x)$.
\begin{prop}
\label{prop_uc}Suppose that a family of Schr\"odinger cocycles $(\alpha,S_{E-v})$ is $\rC^1$-uniformly rotations reducible on a Borel subset $\KK\subset \R$. Then, for all $x\in\T^d$, and any Borel subset $F\subset \KK$, we have
\beq
\label{eq_mu00}
\mu_{00}^x(F)=\frac{1}{2\pi}\int_F \l(b_{21}(x,E)^2+b_{22}(x,E)^2\r)\,dE.
\eeq
\beq
\label{eq_mu11}
\mu_{01}^x(F)=\frac{1}{2\pi}\int_F \l(b_{21}(x,E)  b_{21}(x+\alpha,E)+b_{22}(x,E) b_{22}(x+\alpha,E)\r)\,dE.
\eeq
\end{prop}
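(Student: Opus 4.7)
The strategy is to use rotations reducibility to construct bounded Bloch-type generalized eigenfunctions of $H_x$, identify them with boundary values on $\KK$ of the Weyl--Titchmarsh solutions, and then extract the absolutely continuous spectral densities via Stieltjes inversion of the Green's function. The first step is to diagonalize $R_{\theta(x)}$ using the constant unitary $U=\begin{pmatrix} 1 & 1 \\ i & -i \end{pmatrix}$, whose columns $(1,\pm i)^T$ are the common eigenvectors of every $R_\theta$ with eigenvalues $e^{\mp 2\pi i\theta}$. Setting $\tilde B(x,E):=B(x,E)\,U$, the identity \eqref{eq_rotation_red} transforms into
\[
\tilde B(x+\alpha,E)^{-1}\,S_{E-v}(x)\,\tilde B(x,E) \;=\; \operatorname{diag}\bigl(e^{-2\pi i\theta(x)},\,e^{2\pi i\theta(x)}\bigr).
\]
The bottom row of $\tilde B$ is $(\tilde b_{21},\tilde b_{22})=(b_{21}+ib_{22},\,b_{21}-ib_{22})$; in particular $\tilde b_{22}=\overline{\tilde b_{21}}$ on real $E$, and $|\tilde b_{21}|^2=b_{21}^2+b_{22}^2$ already matches the right-hand side of \eqref{eq_mu00}.

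Writing $\Theta(n,x):=\sum_{k=0}^{n-1}\theta(x+k\alpha)$ (with the obvious extension to $n<0$), the sequences
\[
\psi_+(n;x,E):=e^{-2\pi i\Theta(n,x)}\,\tilde b_{21}(x+n\alpha,E), \quad \psi_-(n;x,E):=e^{2\pi i\Theta(n,x)}\,\tilde b_{22}(x+n\alpha,E)
\]
are bounded complex solutions of $H_x\psi=E\psi$, with Wronskian $W:=\psi_-(n)\psi_+(n+1)-\psi_-(n+1)\psi_+(n)$ constant in $n$ and equal to $\det\tilde B=(\det B)(\det U)=-2i$. I would then identify $\psi_\pm$ with the boundary values at $E+i0$ of the Weyl solutions that are square-summable at $\pm\infty$: for $\varepsilon>0$ the complexified cocycle becomes, in the diagonal coordinates $\tilde B$, a small perturbation of the pure rotations which is uniformly hyperbolic, with stable/unstable eigenlines converging as $\varepsilon\to 0^+$ to the eigenlines $(1,\pm i)^T$ of $R_{\theta(x)}$; the correct orientation is fixed by the sign of $\operatorname{Im}\theta(E+i\varepsilon)$, which is in turn controlled by the Herglotz-type monotonicity of $-\rho(E)$.

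With this identification in hand, the standard resolvent formula
\[
G^x(p,q;z) \;=\; \frac{\psi_-^x(\min(p,q);z)\,\psi_+^x(\max(p,q);z)}{W[\psi_-^x,\psi_+^x](z)}
\]
combined with Stieltjes inversion $d\mu^x_{pq}/dE=\pi^{-1}\operatorname{Im}G^x(p,q;E+i0)$ (valid on $\KK$ since the spectrum is absolutely continuous there) yields the densities. For $p=q=0$ the computation is immediate: $G^x(0,0;E+i0)=\tilde b_{21}(x)\tilde b_{22}(x)/(-2i)=i(b_{21}^2+b_{22}^2)/2$, whose imaginary part produces \eqref{eq_mu00}. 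For $p=0$, $q=1$, I would use the identity $\tilde b_{11}(x)=e^{-2\pi i\theta(x)}\,\tilde b_{21}(x+\alpha)$ obtained by reading off the $(2,1)$-entry of $S_{E-v}(x)\tilde B(x)=\tilde B(x+\alpha)\operatorname{diag}(e^{\mp 2\pi i\theta(x)})$, so that $\psi_+(1)=\tilde b_{11}(x)$ becomes a function of the second row of $B$ evaluated at $x+\alpha$; taking imaginary parts and using $\det B=1$ then gives \eqref{eq_mu11}.

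The main technical obstacle is the identification of the Weyl boundary values with the Bloch solutions: this requires a perturbative hyperbolic-splitting analysis of the complexified cocycle and careful tracking of the sign of $\operatorname{Im}\theta(E+i\varepsilon)$ through the Herglotz property of $\rho(E)$, and it is here that the correct normalization producing the prefactor $\tfrac{1}{2\pi}$ is fixed. Since the section is declared to be mostly expository, one may alternatively invoke a Deift--Simon-type theorem for Jacobi matrices with bounded solutions, thereby reducing the proof to the algebraic computations of the final step.
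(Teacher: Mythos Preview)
Your approach is correct and follows essentially the same underlying strategy as the paper: identify the conjugation $B(\cdot,E)$ with the boundary values of the Weyl $m$-functions, then read off the spectral densities from the imaginary part of the Green's function. The difference is in how the identification step is carried out. The paper bypasses your perturbative hyperbolic-splitting analysis entirely by invoking the standard Kotani-theory identity
\[
m_+(x,E+i0)\;=\;B(x,E)\circ i\;=\;\frac{b_{11}i+b_{12}}{b_{21}i+b_{22}},\qquad m_-(x,E+i0)=-\overline{m_+(x,E+i0)},
\]
(with $B\circ z$ the M\"obius action of $\SL(2,\R)$ on $\C^+$), citing the continuity arguments of Avila and the survey of Damanik to extend this from a.e.\ $x$ to all $x$. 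From this point the computation of $\im G_{00}$ and $\im G_{01}$ is a two-line calculation, and the $1/2\pi$ factor drops out immediately from $\im m_+=(b_{21}^2+b_{22}^2)^{-1}$. Your route via explicit Bloch solutions and the Wronskian $-2i$ arrives at exactly the same expressions, but your proposed justification of the $\psi_\pm\leftrightarrow$ Weyl-solution correspondence through complexification and tracking $\operatorname{sgn}\im\theta(E+i\varepsilon)$ is considerably more work than simply quoting the M\"obius formula. One small point: for \eqref{eq_mu11} both your sketch and the paper's proof end with an expression in the entries of $B(x)$ (namely $b_{11}b_{21}+b_{12}b_{22}$), and the final rewriting in terms of the second row of $B(x+\alpha)$ stated in the proposition requires one more use of the cocycle relation, which neither argument makes fully explicit.
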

\begin{proof}
Both claims follow from some standard calculations from the Kotani theory. We will, essentially, use the notation from \cite{Damanik}. For $\im E>0$ and $x\in \T^d$, denote by $u_{\pm}(x,E)$ the unique solutions of the eigenvalue equation $H u =Eu$ satisfying
$$
u_{\pm}(x,E;0)=1,\quad u_{\pm}(x,E;n)\to 0\,\,\text{ as }\,n\to \pm\infty,
$$
and the $m$-functions
$$
m_{\pm}(x,E)=-u_{\pm}(x,E;\pm 1).
$$
Using the eigenvalue equation, one can also obtain
$$
u_-(x,E;1)=m_-(x,E)+E-v(x).
$$
The Green's function can be expressed through the above Jost solutions:
\begin{small}
$$
G_{nm}(x,E)=\<\delta_n,(H_x-E)^{-1}\delta_m\>=-\frac{u_-(x,E;n)u_+(x,E;m)}{m_+(x,E)+m_-(x,E)+E-v(x)}.
$$
\end{small}
As a consequence,
\begin{small}
\beq
\label{eq_green}
G_{00}(x,E)=\frac{-1}{m_+(x,E)+m_-(x,E)+E-v(x)},\quad G_{01}(x,E)=\frac{m_+(x,E)}{m_+(x,E)+m_-(x,E)+E-v(x)}.
\eeq
\end{small}
We can also extend $m_{\pm}(E,x)$ into $E\in \R$ by considering limits $m_{\pm}(E+i\varepsilon,x)$ as $\varepsilon\to 0+$, which will exist for almost every $E$ for which $L(E)=0$. In particular, they will exist almost everywhere on $\KK$. The values of $m_{\pm}(E,x)$ for $E\in \R$ can be calculated as follows. Any matrix $B\in \SL(2,\R)$ defines the following action on the upper half plane $\C^+$:
$$
B\circ z=\frac{B_{11}z+B_{12}}{B_{21}z+B_{22}},\quad z\in\C^+.
$$
Suppose that $B(\cdot,E)$ satisfies \eqref{eq_rotation_red}. Then, for almost every pair $(E,x)\in \KK\times\T^d$, we have 
\beq
\label{eq_mpm_b}
m_+(x,E+i0)=B(x,E)\circ i=-\overline{m_-(x,E+i0)}.
\eeq
The continuity arguments similar to \cite{avila_mathieu} (see also \cite[Footnote on page 10]{Damanik}) imply that \eqref{eq_mpm_b} actually holds for {\it all} $x\in \T^d$ and almost every $E\in \KK$, where ``almost every'' depends on $x$. However, in the following considerations zero measure sets will not be important, and hence one can use \eqref{eq_mpm_b} as an alternative definition of $m_{\pm}(x,E+i0)$. Using \eqref{eq_mpm_b}, we can calculate
$$
m_+(x,E+i0)=\frac{b_{11} i+b_{12}}{b_{21}i+b_{22}}=\frac{b_{12}b_{22}+b_{11}b_{21}}{b_{21}^2+ b_{22}^2}+i\frac{1}{ b_{21}^2+ b_{22}^2},
$$
where $b_{ij}=b_{ij}(x,E)$ are the matrix elements of $B(x,E)$. Note that \eqref{eq_mpm_b} implies that the denominators in \eqref{eq_green} are purely imaginary for $E\in \KK+i 0$. Therefore, one can calculate densities of spectral measures $\mu_{00}^x$, $\mu_{01}^x$ as follows:
$$
\frac{d\mu_{00}^x}{dE}=\frac{1}{\pi}\im G_{00}(x,E+i0)=\frac{1}{2\pi \im m_+(x,E+i0)}=\frac{b_{21}^2+b_{22}^2}{2\pi}.
$$
$$
\frac{d\mu_{01}^x}{dE}=\frac{1}{\pi}\im G_{01}(x,E+i0)=-\frac{\re\, m_+(x,E+i0)}{2\pi \im m_+(x,E+i0)}=-\frac{b_{12}b_{22}+b_{11}b_{21}}{2\pi}.
$$
\end{proof}
We immediately obtain the following regularity claim.
\begin{cor}
\label{cor_uniform_lipschitz}
Under the assumptions of Proposition $\ref{prop_uc}$, the spectral measures $\mu^x_{pq}$ are absolutely continuous on $\KK$ with respect to the Lebesgue measure. Moreover, their densities are Lipschitz continuous in $x$:
$$
\l|\frac{d\mu_{pq}^x}{dE}-\frac{d\mu_{pq}^y}{dE}\r|\le C_{p-q}|x-y|,
$$
where $C_{p-q}$ depends on $p-q$ and the constant $c$ from the uniform rotations reducibility assumption.
\end{cor}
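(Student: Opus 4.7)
The plan is to derive the corollary directly from Proposition~\ref{prop_uc}, using the uniform $\rC^1$-bound on $B$ to control the densities. Absolute continuity of $\mu^x_{00}$ and $\mu^x_{01}$ on $\KK$ is immediate from \eqref{eq_mu00} and \eqref{eq_mu11}, since the right-hand sides are honest integrals against Lebesgue measure. For the Lipschitz property, observe that the density of $\mu^x_{00}$ is the quadratic polynomial $\tfrac{1}{2\pi}(b_{21}(x,E)^2+b_{22}(x,E)^2)$ in the entries of $B(x,E)$, while the density of $\mu^x_{01}$ is a bilinear form in entries of $B(x,E)$ and $B(x+\alpha,E)$. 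The hypothesis of $\rC^1$-uniform rotations reducibility gives $\|b_{ij}(\cdot,E)\|_{\rC^1(\T^d)}\le c$ uniformly in $E\in\KK$, so by the product rule both densities are uniformly Lipschitz in $x\in\T^d$, with Lipschitz constant depending only on $c$.

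To handle general $p,q$, I would first use covariance $\mu^x_{pq}=\mu^{x+p\alpha}_{0,q-p}$ to reduce to the case $p=0$ (and by $\mu^x_{pq}=\overline{\mu^x_{qp}}$, further to $n:=q-p\ge 0$). Then I would expand $\delta_n$ in the cyclic subspace generated by $\delta_0,\delta_1$ under $H_x$: by induction on the eigenvalue equation $\delta_{k+1}=(H_x-v(x+k\alpha))\delta_k-\delta_{k-1}$, one obtains
\[
\delta_n=p^{(n-1)}_x(H_x)\delta_0+q^{(n-1)}_x(H_x)\delta_1,
\]
where $p^{(n-1)}_x,q^{(n-1)}_x$ are polynomials of degree at most $n-1$ whose coefficients are polynomial expressions in $\{v(x+k\alpha)\colon 0\le k\le n-1\}$, and therefore $\rC^1$-functions of $x$ with $\rC^1$-bounds depending only on $n$ and $\|v\|_{\rC^1}$. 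Applying the spectral theorem gives, on $\KK$,
\[
\frac{d\mu^x_{0,n}}{dE}=p^{(n-1)}_x(E)\,\frac{d\mu^x_{00}}{dE}+q^{(n-1)}_x(E)\,\frac{d\mu^x_{01}}{dE}.
\]
Since $\KK\subset\Sigma_{\alpha,v}$ is bounded, the polynomials and their $x$-derivatives are uniformly bounded on $\KK$, and the product rule yields a Lipschitz estimate for the density with constant $C_n$ depending only on $n$, $c$, and $\|v\|_{\rC^1}$.

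There is no substantial obstacle here: the argument is essentially bookkeeping once the explicit density formulas of Proposition~\ref{prop_uc} are in hand. The most delicate point is ensuring that the constants are uniform in $E\in\KK$, which is built into the uniform hypothesis on $\|B(\cdot,E)\|_{\rC^1}$; crucially, no regularity of $B$ in the $E$ variable is required at any stage, since the Lipschitz bound is pointwise in $E$. The explicit $n$-dependence of $C_n$ is governed by the growth of the transfer polynomials $p^{(n-1)}_x,q^{(n-1)}_x$ and their derivatives, which I would not attempt to optimize.
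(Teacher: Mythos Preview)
Your proof is correct and follows exactly the approach the paper outlines: the paper states the corollary as an immediate consequence of Proposition~\ref{prop_uc} together with the preceding reduction $\delta_n=p_x^{(n-1)}(H_x)\delta_0+q_x^{(n-1)}(H_x)\delta_1$ and the covariance $\mu_{pq}^x=\mu_{p+n,q+n}^{x+n\alpha}$, without spelling out the details you have supplied. Your observation that no regularity of $B$ in $E$ is needed, and that the constant picks up a dependence on $\|v\|_{\rC^1}$ through the transfer polynomials, is accurate and worth noting.
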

\begin{theorem}
\label{th_regular}
Suppose that the family $\{H_x\}_{x\in \T^d}$ is $\rC^1$-uniformly rotations reducible on a Borel subset $\KK\subset \R$. Let $g\in \rL^{\infty}(\R)$, $\supp g\subset \KK$. Then, for any $x_0\in \T^d$, we have
$$
\slim\limits_{x\to x_0} g(H_x)=g(H_{x_0}).
$$
\end{theorem}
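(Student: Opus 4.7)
\emph{Proof plan.}
Since $\|g(H_x)\| \le \|g\|_{\rL^\infty}$ uniformly in $x$, an $\varepsilon/3$ argument reduces strong continuity of $x\mapsto g(H_x)$ at $x_0$ to strong convergence $g(H_x)\delta_n \to g(H_{x_0})\delta_n$ in $\ell^2(\Z)$ for each fixed $n\in\Z$. As the vectors $g(H_x)\delta_n$ are uniformly norm-bounded, strong convergence is equivalent to the combination of (i) weak convergence and (ii) convergence of norms. Applying the density argument once more, both reduce to proving that, for every fixed pair $m,n\in\Z$ and for $h\in\{g,\,|g|^2\}$, the scalar function
\[
x \longmapsto \langle \delta_m, h(H_x)\delta_n\rangle = \int_\R h(E)\, d\mu^x_{mn}(E)
\]
is continuous at $x_0$.

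The decisive input is then Corollary~\ref{cor_uniform_lipschitz}, which asserts that on $\KK$ each spectral measure $\mu^x_{mn}$ is absolutely continuous with density satisfying the pointwise Lipschitz estimate
\[
\left|\frac{d\mu^x_{mn}}{dE} - \frac{d\mu^{x_0}_{mn}}{dE}\right| \le C_{m-n}\,|x-x_0|, \qquad E\in\KK.
\]
Because $\supp h\subset\KK$, the integral collapses to one over $\KK$, which is bounded as a subset of the compact spectrum $\Sigma_{\alpha,v}$. Consequently
\[
\left|\int h\, d\mu^x_{mn} - \int h\, d\mu^{x_0}_{mn}\right| \le \|h\|_{\rL^\infty}\,\mathrm{Leb}(\KK)\,C_{m-n}\,|x-x_0|,
\]
so continuity at $x_0$ follows immediately. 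Specializing to $h=g$ and $h=|g|^2$ delivers the weak and norm convergence needed in the first paragraph.

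There is no deep obstacle: the argument is a straightforward combination of the spectral density regularity from Proposition~\ref{prop_uc} and Corollary~\ref{cor_uniform_lipschitz} with the standard density-plus-uniform-bound approximation scheme. The one conceptual point worth stressing is that strong continuity of $x\mapsto g(H_x)$ cannot hold for arbitrary bounded Borel $g$ on $\R$ (for instance, because $\one_{\{E_0\}}(H_x)$ can jump as a spectrum edge or eigenvalue moves); the hypothesis $\supp g\subset\KK$ is indispensable, as it confines all the spectral integrals to the set where the densities $d\mu^x_{mn}/dE$ are uniformly Lipschitz in $x$.
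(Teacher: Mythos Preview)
Your proposal is correct and follows essentially the same approach as the paper: reduce to strong convergence on basis vectors by uniform boundedness, then split strong convergence into weak convergence plus convergence of norms, and finally verify both by integrating the Lipschitz density bound of Corollary~\ref{cor_uniform_lipschitz} against $g$ and $|g|^2$. The paper additionally uses the covariance $\mu^x_{pq}=\mu^{x+n\alpha}_{p+n,q+n}$ to reduce to the case $n=0$, but this is cosmetic; your direct use of the general $C_{m-n}$ constant from the corollary is equally valid.
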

\label{th_strong_g}
\begin{proof}
Since $g(H_x)$ are uniformly bounded, it would be sufficient to show $g(H_x)\delta_n\to g(H_{x_0})\delta_n$ strongly. By shifting the $x$ variable, one can assume $n=0$. Since $g(H_x)\delta_0$ are also uniformly bounded in $\ell^2(\Z)$, it is sufficient to establish the following:
\beq
\label{eq_weak1}
\<\delta_n,g(H_x)\delta_0\>\to \<\delta_n,g(H_{x_0})\delta_0\>,\quad \forall n\in \Z
\eeq
\beq
\label{eq_weak2}
\|g(H_x)\delta_0\|\to \|g(H_{x_0})\delta_0\|.
\eeq
To establish \eqref{eq_weak1}, note
$$
\l|\<\delta_n,g(H_x)\delta_0\>-\<\delta_n,g(H_{x_0})\delta_0\>\r|=\l|\int g(E)d\mu^{x}_{n0}(E)-\int g(E)d\mu^{x_0}_{n0}(E)\r|\le \|g\|_{\rL^{\infty}}c_{n}|x-x_0||\KK|,
$$
where $c_n$ is the constant from Corollary \ref{cor_uniform_lipschitz}. Similarly, \eqref{eq_weak2} can be established using the fact
$$
\|g(H_x)\delta_0\|^2=\<\delta_0,|g(H_x)|^2\delta_0\>,
$$
and then repeating the earlier argument applied to the function $|g|^2$.
\end{proof}
\begin{cor}
Suppose that the conclusion of Theorem $\ref{th_strong_g}$ is satisfied for a fixed function $g\in \rL^{\infty}(\R)$ and a sequence of Borel subsets $\KK_1\subset\KK_2\ldots$. Then it also satisfied for $\KK=\cup_j \KK_j$.
\end{cor}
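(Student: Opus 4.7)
The approach is an approximation argument. Set $g_j:=g\one_{\KK_j}$; by the hypothesis applied to each $\KK_j$, $\slim_{x\to x_0}g_j(H_x)=g_j(H_{x_0})$, and the same convergence holds for $\one_{\KK_j}$ (which is also bounded with support in $\KK_j$). Since $\|g(H_x)\|\le\|g\|_\infty$ uniformly in $x$, it is enough to establish $g(H_x)\psi\to g(H_{x_0})\psi$ for each $\psi\in\ell^2(\Z)$. The triangle inequality gives
\[
\|g(H_x)\psi-g(H_{x_0})\psi\|\le\|(g-g_j)(H_x)\psi\|+\|g_j(H_x)\psi-g_j(H_{x_0})\psi\|+\|(g-g_j)(H_{x_0})\psi\|,
\]
where the middle term tends to zero as $x\to x_0$ by hypothesis, and the last term tends to zero as $j\to\infty$ by the spectral theorem applied to the finite measure $\mu^{x_0}_{\psi\psi}$ together with the fact that $\KK\setminus\KK_j\downarrow\emptyset$.

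The heart of the argument is the uniform estimate $\|(g-g_j)(H_x)\psi\|^2\le\|g\|_\infty^2\,\mu^x_{\psi\psi}(\KK\setminus\KK_j)$, with the tail mass going to zero uniformly in $x$ in a neighborhood of $x_0$. The hypothesis applied to $\one_{\KK_j}$ yields continuity of $x\mapsto\mu^x_{\psi\psi}(\KK_j)=\|\one_{\KK_j}(H_x)\psi\|^2$ at $x_0$ for each $j$, and this sequence increases monotonically in $j$ with pointwise limit $\mu^x_{\psi\psi}(\KK)$. A Dini-type argument on a compact neighborhood of $x_0$ then provides the required uniformity, provided the pointwise limit is itself continuous at $x_0$.

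The principal obstacle is therefore the continuity of $x\mapsto\mu^x_{\psi\psi}(\KK)$ at $x_0$. Lower semicontinuity is automatic as the supremum of continuous functions. For upper semicontinuity, I would exploit the operator-norm continuity $\|H_x-H_{x_0}\|\to 0$ (an easy consequence of uniform continuity of $v$ on $\T^d$) together with absolute continuity of the spectral measures on $\KK$ inherited from the $\rC^1$-reducibility data via Proposition~\ref{prop_uc}. Concretely, $\one_\KK$ can be approximated from above by continuous functions $f_\delta$ with $\int|f_\delta-\one_\KK|\,d\mu^{x_0}_{\psi\psi}<\delta$; for each such $f_\delta$, $x\mapsto\langle\psi,f_\delta(H_x)\psi\rangle$ is norm-continuous and dominates $\mu^x_{\psi\psi}(\KK)$, so sending $\delta\to 0$ yields the upper semicontinuity and hence continuity of the limit, completing the Dini argument.
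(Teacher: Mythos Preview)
Your approach differs from the paper's: the paper gives a one-line argument via the Banach--Steinhaus theorem, asserting that the family $\{g(H_x)\}$ is uniformly bounded and that convergence can be verified on the dense set $\cup_j\Ran(\one_{\KK_j}(H_x))$ by applying Theorem~\ref{th_strong_g} with $\KK=\KK_j$. Your triangle-inequality decomposition with a Dini-type argument is more explicit and, to your credit, isolates what is really at stake --- uniform-in-$x$ smallness of $\|(g-g_j)(H_x)\psi\|$ for large $j$ --- a point the paper's terse argument passes over.

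There is, however, a genuine gap in your upper-semicontinuity step. The claim that $\one_\KK$ can be approximated from above in $\rL^1(\mu^{x_0}_{\psi\psi})$ by continuous functions $f_\delta$ does \emph{not} follow from absolute continuity of $\mu^{x_0}_{\psi\psi}|_\KK$. That hypothesis tells you nothing about the spectral measure on $\R\setminus\KK$, and any continuous $f\ge\one_\KK$ is forced to satisfy $f\ge 1$ on all of $\overline{\KK}$. Concretely, if $\KK$ happens to be dense in some interval $I$ (which is not excluded for Borel sets cut out by conditions like $\|B(E)\|_{\rC^s}\le j$), and $\mu^{x_0}_{\psi\psi}$ carries nontrivial mass on $I\setminus\KK$ --- singular or absolutely continuous, neither is ruled out by the corollary's hypotheses --- then every admissible $f_\delta$ satisfies $\int f_\delta\,d\mu^{x_0}_{\psi\psi}\ge\mu^{x_0}_{\psi\psi}(I)$, which is bounded away from $\mu^{x_0}_{\psi\psi}(\KK)$. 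Proposition~\ref{prop_uc} only controls the spectral measures on each $\KK_j$ and hence on $\KK$; it gives you no handle on $\overline{\KK}\setminus\KK$, which is exactly where the outer approximation must be tight. Your Dini argument therefore lacks the continuity of the limit function that it requires.
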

\begin{proof}
The statement follows from the Banach --- Steinhaus theorem: indeed, the family of operators $\{g(H_x)\}_{x\in \T^d}$ is uniformly bounded, and the convergence can be verified on the dense set $\cup_j \Ran(\one_{\KK_j}(H_x))$, by applying the previous theorem with $\KK=\KK_j$.
\end{proof}
\section{Appendix}
In this section, we will establish some elementary bounds which will happen to be useful later. All functional spaces denoted by $\ell$ with some indices will be on $\Z^d$. 
\noindent Denote by $\ell^2_s$ the space of Fourier transform of functions from $\rH^s(\T^d)$ with the following norm:
$$
\|u\|_{\ell^2_s}^2=\sum_{n\in \Z^d}(1+|n|)^{2s}|u(n)|^2.
$$
Recall the H\"older inequality: for $u\in \ell^p$, $v\in \ell^q$, we have
\beq
\label{eq_holder}
\|uv\|_{\ell^r}\le \|u\|_{\ell^p}\|v\|_{\ell^q},\quad \frac{1}{r}=\frac{1}{p}+\frac{1}{q},\quad 1\le p,q,r\le\infty.
\eeq
The following lemma is elementary:
\begin{lemma}
\label{lemma_convolution}
Let $a\in \Z^d$ and $s_1+s_2-d>0$.
$$
\sum_{n\in \Z^d}\frac{1}{(1+|a-n|)^{s_1}(1+|n|)^{s_2}}\leq \frac{c(s_1,s_2,s_3,d)}{(1+|a|)^{s_1+s_2-d}}.
$$
\end{lemma}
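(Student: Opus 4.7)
The plan is to reproduce the standard Riesz-type convolution estimate on $\Z^d$, which is the lattice analogue of the continuous identity $|x|^{-s_1}\ast|x|^{-s_2}\sim|x|^{-(s_1+s_2-d)}$. First I would dispose of the case of bounded $|a|$ (say $|a|\le 10$): there the right-hand side is bounded below by a positive constant, while the left-hand side is a finite sum whose tail is controlled by $\sum_n (1+|n|)^{-(s_1+s_2)}$, which converges because $s_1+s_2>d$. So the inequality is trivial, and I reduce to the case $|a|\gg 1$.

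For large $|a|$, I would partition $\Z^d$ into three natural regions:
\[
A=\{n:|n|\le |a|/3\},\quad B=\{n:|a-n|\le |a|/3\},\quad C=\Z^d\setminus(A\cup B).
\]
On $A$ one has $|a-n|\ge 2|a|/3$, so the first factor pulls out a gain $(1+|a|)^{-s_1}$, leaving a ball sum:
\[
\sum_{n\in A}\frac{1}{(1+|a-n|)^{s_1}(1+|n|)^{s_2}}\le \frac{C}{(1+|a|)^{s_1}}\sum_{|n|\le |a|/3}\frac{1}{(1+|n|)^{s_2}}.
\]
I would then invoke the elementary counting estimate $\sum_{|n|\le R}(1+|n|)^{-s}\lesssim R^{d-s}$ (valid for $0<s<d$; with a log for $s=d$ and $O(1)$ for $s>d$) to get $\lesssim (1+|a|)^{d-s_1-s_2}$, which is exactly the claimed bound. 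Region $B$ is handled by the change of variables $n\mapsto a-n$.

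For the ``bulk'' region $C$, both $(1+|n|)$ and $(1+|a-n|)$ are $\gtrsim |a|$, and I would split it further according to whether $|n|\le|a-n|$ or not. In the first subcase one has the pointwise bound
\[
(1+|a-n|)^{s_1}(1+|n|)^{s_2}\ge c(1+|n|)^{s_1+s_2},
\]
so the sum is dominated by $\sum_{|n|\ge |a|/3}(1+|n|)^{-(s_1+s_2)}\lesssim (1+|a|)^{d-s_1-s_2}$, again thanks to $s_1+s_2>d$. The other subcase is symmetric, and adding the three pieces yields the claim.

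I do not expect any real obstacle; this is a bookkeeping argument. The only subtlety worth flagging is that the ball-sum estimate on $A$ gives the stated exponent $s_1+s_2-d$ only when $s_2\le d$ (and symmetrically on $B$), with the other ranges yielding strictly better decay which is also bounded by the claimed right-hand side provided one restricts attention to $|a|$ large---so in practice the lemma is applied with the parameters in the regime where the exponent $s_1+s_2-d$ is the smallest of the three candidates $s_1$, $s_2$, $s_1+s_2-d$. The ``$s_3$'' in the constant $c(s_1,s_2,s_3,d)$ on the right-hand side is a typo; the constant depends only on $s_1$, $s_2$, and $d$.
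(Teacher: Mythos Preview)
Your proof is correct and follows essentially the same three-region decomposition as the paper (the paper uses cutoffs at $|a|/2$ rather than $|a|/3$ and handles region $C$ in one line via $|a-n|\gtrsim |n|$ rather than splitting into two symmetric subcases, but these are cosmetic differences). Your remark that the $s_3$ in the constant is a typo is also correct.
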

\begin{proof}
We have
\begin{small}
\begin{multline*}
\sum_{n\in \Z^d}\frac{1}{(1+|a-n|)^{s_1}(1+|n|)^{s_2}}\le \l(\sum_{|n|\le a/2}+\sum_{|n-a|\le a/2}+\sum_{|n|>a/2,\,|n-a|>a/2}\r)\frac{1}{(1+|a-n|)^{s_1}(1+|n|)^{s_2}}\\
\le \frac{1}{(1+|a/2|)^{s_1}}\sum_{|n|\le a/2}\frac{1}{(1+|n|)^{s_2}}+\frac{1}{(1+|a/2|)^{s_2}}\sum_{|n|\le a/2}\frac{1}{(1+|n|)^{s_1}}+\sum_{|n|\ge  a/4}\frac{1}{(1+|n|/4)^{s_1+s_2}}\\
\le c(s_1,s_2,a)(1+|a|)^{d-s_1-s_2}.
\end{multline*}
\end{small}
\end{proof}
\noindent Finally, recall that $u(p)=(1+|n|)^{-s}$ belongs to $\ell^{r}$ for $rs>d$. We will need the following ``square root'' bound.
\begin{lemma}
Suppose that $u\in \ell^2_s$, and let $v(n)=|u(n)|^{1/2}$. Then
$$
\|v\|_{\ell^2_r}\le C(r,s,d)\|u\|_{\ell^2_s},\quad 0\le r<\frac{s}{2}-\frac{d}{4}.
$$
\end{lemma}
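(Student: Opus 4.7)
The plan is to reduce the claim to a single application of the Cauchy--Schwarz inequality, after rewriting $\|v\|_{\ell^2_r}^2$ as a weighted $\ell^1$-norm of $|u|$. Concretely, since $|v(n)|^2=|u(n)|$, we have
$$
\|v\|_{\ell^2_r}^2=\sum_{n\in\Z^d}(1+|n|)^{2r}|u(n)|.
$$
I would then split the weight as $(1+|n|)^{2r}=(1+|n|)^{2r-s}\cdot(1+|n|)^{s}$ and apply Cauchy--Schwarz (i.e.\ H\"older with $p=q=2$ in \eqref{eq_holder}) to bound the right-hand side by
$$
\Bigl(\sum_{n\in\Z^d}(1+|n|)^{4r-2s}\Bigr)^{1/2}\Bigl(\sum_{n\in\Z^d}(1+|n|)^{2s}|u(n)|^2\Bigr)^{1/2}=\Bigl(\sum_{n\in\Z^d}(1+|n|)^{4r-2s}\Bigr)^{1/2}\|u\|_{\ell^2_s}.
$$

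The first factor is finite precisely when $2s-4r>d$, equivalently $r<\tfrac{s}{2}-\tfrac{d}{4}$, which is exactly the hypothesis of the lemma. Thus, setting
$$
C(r,s,d):=\Bigl(\sum_{n\in\Z^d}(1+|n|)^{4r-2s}\Bigr)^{1/2}<+\infty,
$$
we obtain $\|v\|_{\ell^2_r}^2\le C(r,s,d)\|u\|_{\ell^2_s}$, from which the stated bound (possibly up to the power $1/2$ on the right-hand side, which is what is actually needed and used in the inclusion chain $h^{1/2}\in\ell^2_{s/2-d/4+}$ from the proof of the main lemma) follows.

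There is essentially no obstacle here; the only subtle point is recognizing that the correct pairing to feed into Cauchy--Schwarz is $(1+|n|)^{2r-s}\cdot(1+|n|)^s|u(n)|$, so that the weighted half ends up as $\|u\|_{\ell^2_s}$ and the unweighted half reduces to summability of a pure polynomial weight on $\Z^d$, which is controlled by the standard threshold $d/2$ for power-type decay in $d$ dimensions.
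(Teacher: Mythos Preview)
Your proof is correct and essentially identical to the paper's: the paper phrases the same computation as H\"older with exponents $4,4$ applied to the factorization $(1+|n|)^r v(n)=(1+|n|)^{r-s/2}\cdot(1+|n|)^{s/2}v(n)$, which after squaring is exactly your Cauchy--Schwarz split. You are also right that both arguments actually yield $\|v\|_{\ell^2_r}\le C\|u\|_{\ell^2_s}^{1/2}$ (the stated power on the right-hand side is a typo, as the scaling $u\mapsto\lambda u$ shows), and that this is precisely what is used in the inclusion chain.
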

\begin{proof}
The condition $u\in \ell^2_s$ is equivalent to $\{(1+|n|)^s u(n)\}_{n\in \Z^d}\in \ell^2$, which is in turn equivalent to $\{(1+|n|)^{s/2}v(n)\}_{n\in \Z^d}\in \ell^4$. We can multiply it by an appropriate power of $(1+|n|)^{-1}$ in order to get it back to $\ell^2$: since $\{(1+|n|)^{s'}\}_{n\in \Z^d}\in \ell^4$ for $s'>d/4$, we have by H\"older inequality
$$
\{(1+|n|)^{-s'}(1+|n|)^{s/2}v(n)\}_{n\in \Z^d}\in \ell^2,\quad s'>d/4.
$$
which implies the statement of the lemma.
\end{proof}
Finally, the following is the dual version of the multiplicative Sobolev inequality \cite[Theorem 4.39]{Adams} on the language of convolutions. One can also prove it directly and use in the proof of multiplicative inequalities.
\begin{lemma}
Let $u\in \ell^2_{s_1}$, $v\in \ell^2_{s_2}$. Denote their convolution by
$$
(u*v)(n)=\sum_{m\in\Z^d}u(n-m)v(m).
$$
Then
$$
\|u*v\|_{\ell^2_s}\le C(s,s_1,s_2)\|u\|_{\ell^2_{s_1}}\|v\|_{\ell^2_{s_2}},\quad 0<s<s_1+s_2-d/2.
$$
\end{lemma}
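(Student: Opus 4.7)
The plan is to apply a single weighted Cauchy--Schwarz inequality, chosen so that the weights of $u$ and $v$ correspond exactly to the $\ell^2_{s_1}$ and $\ell^2_{s_2}$ norms, and then use Lemma~5.1 to control the dual weight. Assume without loss of generality $s_1,s_2\ge 0$ (and pick them positive, for which Lemma~5.1 applies cleanly). Introduce the auxiliary sequences $U(k):=(1+|k|)^{s_1}|u(k)|$ and $V(k):=(1+|k|)^{s_2}|v(k)|$, which lie in $\ell^2(\Z^d)$ with $\|U\|_{\ell^2}=\|u\|_{\ell^2_{s_1}}$, $\|V\|_{\ell^2}=\|v\|_{\ell^2_{s_2}}$.

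Write $u(n-m)v(m)=\bigl[(1+|n-m|)^{-s_1}U(n-m)\bigr]\cdot\bigl[(1+|m|)^{-s_2}V(m)\bigr]$ and apply Cauchy--Schwarz in the $m$-variable to obtain
\[
|(u*v)(n)|^2\le K(n)\cdot (U^2*V^2)(n),\qquad K(n):=\sum_{m\in\Z^d}\frac{1}{(1+|n-m|)^{2s_1}(1+|m|)^{2s_2}}.
\]
Since $s<s_1+s_2-d/2$ and $s\ge 0$ force $2s_1+2s_2>d$, Lemma~5.1 gives
\[
K(n)\le \frac{C(s_1,s_2,d)}{(1+|n|)^{2s_1+2s_2-d}}.
\]

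Multiplying by $(1+|n|)^{2s}$ yields
\[
(1+|n|)^{2s}|(u*v)(n)|^2\le \frac{C}{(1+|n|)^{2(s_1+s_2-d/2-s)}}\,(U^2*V^2)(n),
\]
and the assumption $s<s_1+s_2-d/2$ makes the prefactor bounded (in fact $\le C$) uniformly in $n$. Summing over $n$ and using the $\ell^1$ version of Young's inequality for non-negative sequences, $\sum_n (U^2*V^2)(n)=\|U^2\|_{\ell^1}\|V^2\|_{\ell^1}=\|U\|_{\ell^2}^2\|V\|_{\ell^2}^2$, gives
\[
\|u*v\|_{\ell^2_s}^2\le C\,\|u\|_{\ell^2_{s_1}}^2\|v\|_{\ell^2_{s_2}}^2,
\]
as desired.

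There is no genuine obstacle once Lemma~5.1 is available: the proof is a one-shot Cauchy--Schwarz plus the convolution bound, and the exponent condition matches exactly because $s<s_1+s_2-d/2\Longleftrightarrow 2s_1+2s_2-d-2s>0$. The only mildly delicate point is the book-keeping at the endpoints (e.g.\ if $s_1$ or $s_2$ vanishes), which can be sidestepped by a slight perturbation of the exponents or, alternatively, by passing to the dual picture via Plancherel, where the statement becomes the standard multiplicative Sobolev estimate $\|fg\|_{\rH^s}\le C\|f\|_{\rH^{s_1}}\|g\|_{\rH^{s_2}}$ (Adams, Theorem~4.39) for pointwise products on $\T^d$.
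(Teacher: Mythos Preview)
Your proof is correct. The paper does not actually give a detailed argument for this lemma: it simply identifies the statement as the Fourier dual (via Plancherel) of the multiplicative Sobolev product inequality in Adams, Theorem~4.39, and remarks that ``one can also prove it directly.'' Your weighted Cauchy--Schwarz argument, feeding the kernel $K(n)$ into the elementary convolution bound of the Appendix and then summing the $\ell^1$ identity $\sum_n(U^2*V^2)(n)=\|U\|_{\ell^2}^2\|V\|_{\ell^2}^2$, is precisely such a direct proof. The Plancherel alternative you mention in your closing sentence is exactly the route the paper itself invokes, so both approaches are already acknowledged in the text; yours has the advantage of being self-contained within the paper's Appendix.
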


\end{document}